\setlist{noitemsep,parsep=6pt,partopsep=0pt,topsep=0pt}
\theoremstyle{remark}
\newtheorem{remark}{Remark}
\theoremstyle{plain}
\newtheorem{lemma}{Lemma}
\newtheorem{proposition}{Proposition}
\theoremstyle{definition}
\newtheorem{definition}{Definition}
\newtheorem{example}{Example}
\newcommand{\marker}{\hfill\textsquare}
\renewcommand{\epsilon}{\varepsilon}
\newcommand{\citepos}[1]{\citeauthor{#1}'s (\citeyear{#1})} 
\newcommand{\R}{\mathbb{R}}
\newcommand{\indic}{\mathbbm{1}}
\def\Reals{\mathbb R}
\DeclareMathOperator*{\argmax}{\arg\max}
\DeclareMathOperator{\cl}{cl}
\DeclareMathOperator{\sign}{sign}
\let \savenumberline \numberline
\def \numberline#1{\savenumberline{#1.}}
  \renewcommand\@seccntformat[1]{\csname the#1\endcsname.{\hskip.7em\relax}} 
\renewenvironment{proof}[1][\proofname] {\par\pushQED{\qed}\normalfont\topsep6\p@\@plus6\p@\relax\trivlist\item[\hskip\labelsep\bfseries#1\@addpunct{.}]\ignorespaces}{\popQED\endtrivlist\@endpefalse}
\newcommand{\mailto}[1]{\href{mailto:#1}{\texttt{#1}}} 
\def \suppapp{{Supplementary Appendix}}
\let\oldfootnote\footnote
\renewcommand\footnote[1]{\oldfootnote{\hspace{.5mm}#1}}
\titlespacing\section{0pt}{10pt plus 2pt minus 2pt}{4pt plus 2pt minus 2pt} 
\titlespacing\subsection{0pt}{6pt plus 2pt minus 2pt}{2pt plus 2pt minus 2pt} 
\titlespacing\subsubsection{0pt}{6pt plus 2pt minus 2pt}{0pt plus 2pt minus 2pt} 
\titlespacing{\paragraph}{%
  0pt}{
  0.5\baselineskip}{
  1em}
\newcommand{\appendixref}[1]{\hyperref[#1]{Appendix \ref{#1}}}
\renewcommand{\conclusion}{\hyperref[sec:conclusion]{Conclusion}}
\tikzstyle{info}=[circle,thick,draw=black,fill=black!25,minimum size=4mm]
\tikzstyle{uninfo}=[circle,thick,draw=black,fill=white,minimum size=4mm]
\tikzstyle{inforecog}=[circle,line width=1mm,draw=black!50,fill=black!25,minimum size=4mm]
\tikzstyle{uninforecog}=[circle,line width=1mm,draw=black!50,fill=white,minimum size=4mm]
\tikzstyle{traded}=[draw, line width=1mm]
\tikzstyle{recog}=[draw=black!50, line width=1mm]
\newcommandx{\andreas}[2][1=]{\todo[linecolor=black,backgroundcolor=black!25,bordercolor=black,#1]{#2}}
\newcommandx{\navin}[2][1=]{\todo[linecolor=red,backgroundcolor=red!25,bordercolor=red,#1]{#2}} 
\begin{document}

\begin{titlepage}

\title{Convex Choice\thanks{Zhihui Wang and Yangfan Zhou provided excellent research assistance. Kleiner 
acknowledges financial support from the German Research Foundation (DFG) through Germany’s Excellence Strategy - EXC 2047/1 - 390685813, EXC 2126/1-390838866 and the CRC TR-224 (Project B01).}}
\author{
Navin Kartik\thanks{Department of Economics, Columbia University.  Email: \mailto{nkartik@gmail.com}.}
\and 
Andreas Kleiner\thanks{Department of Economics, University of Bonn.  Email:  \mailto{andykleiner@gmail.com}.}
}

\maketitle

\begin{abstract}
\noindent 
For multidimensional Euclidean type spaces, we study convex choice: from any choice set, the set of types that make the same choice is convex. We establish that, in a suitable sense, this property characterizes the sufficiency of local incentive constraints. Convex choice is also of interest more broadly. We tie convex choice to a notion of directional single-crossing differences (DSCD). For an expected-utility agent choosing among lotteries, DSCD implies that preferences are either one-dimensional or must take the affine form that has been tractable in multidimensional mechanism design.

\end{abstract}
\thispagestyle{empty} 

\end{titlepage}

\setstretch{1.1}

\onehalfspacing

\section{Introduction}
\label{sec:intro}

A fundamental result in one-dimensional signaling and mechanism design is that the Spence-Mirlees single-crossing property reduces incentive compatibility (IC) to local IC. Local IC requires only that ``nearby'' types have no incentive to mimic each other. Sufficiency of local IC is central to the tractability, and hence success, of the aforementioned paradigms, particularly for finding optimal mechanisms.

Our paper is concerned with multidimensional environments. Consider an agent with utility function $u(a,\theta)$, where $a\in A$ is an allocation or choice variable and $\theta \in \Theta$ is the agent's type or preference parameter.  Let $\Theta \subset \Reals^n$ be convex. We study a property we call \emph{convex choice}: for any choice set $B\subset A$ and any allocation $a\in B$, the set of types for which $a$ is (uniquely) optimal is convex. 

Convex choice is, in our view, a compelling desideratum. There is a sense in which convex choice characterizes the sufficiency of local IC. More precisely, absent indifferences, convex choice characterizes when IC between any two types is equivalent to local IC along the line segment between those two types (\autoref{prop:CC-IC}). The necessity of convex choice for such ``integration up'' of local IC suggests that absent convex choice, a mechanism design model is unlikely to be tractable. Convex choice is also relevant for reasons beyond the sufficiency of local IC. For instance, it offers a natural generalization of ``interval equilibria'' in multidimensional versions of \citepos{CS82} famous cheap-talk model. 

\autoref{prop:CC-DSCD} shows that convex choice is essentially equivalent to a form of single crossing that we term \emph{directional single-crossing differences} (DSCD). Modulo details, DSCD requires that for any two (undominated) allocations, the set of types that prefer one to the other is defined by a half-space. Importantly, the orientation of the half-space can vary with the allocation pair. When $A\subset \Reals^n$, leading examples of DSCD are the quadratic-loss utility $u(a,\theta)=-\Vert a-\theta \Vert^2$ and the constant-elasticity-of-substitution utility $u(a,\theta)=\left(\sum_{i=1}^d (a_{i})^r\theta_{i} \right)^{s}$ with parameters $r\in \Reals$ and $s>0$..

Studies of multidimensional mechanism design frequently assume the \emph{affine form} $u(a,\theta)=v(a) \cdot \theta + w(a)$ \citep[e.g.,][]{Armstrong:96,RC:98,MV:07,Kleiner:22}. Our approach delivers a novel perspective on this specification. Propositions  \ref{prop:DSCDstar} and \ref{prop:DSCDstar-strict-dominance} show that if the convex choice/DSCD requirement is strengthened to hold for expected utility over allocation lotteries, then---subject to some regularity conditions---either the setting is effectively one-dimensional, or preferences can be represented in the affine form. A provocative suggestion, then, is that when allocation lotteries are important, multidimensional mechanism design is unlikely to be tractable beyond the affine form. Lotteries may be inescapable when stochastic mechanisms are allowed or there are other agents whose private information also influences a deterministic mechanism.

\paragraph{Related Literature.}
Convex choice is closely related to a notion of \cite{Grandmont:78}, as elaborated below. Our \autoref{prop:CC-DSCD} is thus also similar to his characterization Proposition (p.~322). Beyond that, our work is distinct; in particular, his interest was in social choice, he did not relate convex choice to the sufficiency of local IC
or to applications like cheap talk, and he had no analog to our characterization of DSCD over lotteries.

Our work also relates to \citet{Carroll:12}, \citet{KLR:23}, and \citet{MM:88}. We detail these connections subsequently.

\section{Convex Choice and Incentive Compatibility}

There is an agent with type $\theta \in \Theta \subset \R^n$, where $\Theta$ is convex.\footnote{In this paper, the ``$\subset$'' symbol means ``weak subset''.} We write $\theta_{(i)}$ for the $i$-th coordinate of $\theta$, and $\theta>\theta'$ if $\theta_{(i)} \geq \ \theta'_{(i)}$ for all $i$  with strict inequality for some $i$. We denote the \emph{line segment} between types $\theta'$ and $\theta''$ by $$\ell(\theta',\theta''):=\left\{\theta: \exists \lambda\in[0,1] \text{ with } \theta=\lambda \theta'+(1-\lambda)
\theta''\right\}.$$
The agent must take an action, or choose an allocation, $a\in A$. The agent's preferences are given by the utility function $u:A \times \Theta \to \Reals$.
\begin{definition}
\label{def:CC}
$u$ has \emph{convex choice} if for all $B\subset A$ and $a\in B$, 
	$$\left\{\theta: \{a\}=\argmax_{b\in B} u(b,\theta)\right\} \text{ is convex.}$$
\end{definition}

In other words, convex choice requires that from any choice set, the set of types that find an action uniquely optimal is convex. It is equivalent to only consider all binary choice sets, as the intersection of convex sets is convex. Convex choice implies that the preferences of any type $\theta \in \ell(\theta',\theta'')$ are ``between'' those of $\theta'$ and $\theta''$, in the sense of \cite{Grandmont:78}.\footnote{\citeauthor{Grandmont:78}'s betweenness notion is stated for binary relations that need not be transitive; so, in his setting, choice from non-binary choice sets may not be well-defined. His notion also imposes some requirements concerning indifference. Ignoring indifferences and assuming transitivity, convex choice is equivalent to the preferences of any $\theta\in \ell(\theta',\theta'')$ being between those of $\theta'$ and $\theta''$.}

Convex choice is related to but different from \citepos{KLR:23} ``interval choice''. They require that if $\theta'' > \theta' > \theta$ and some action is optimal for $\theta$ and $\theta''$, then it is also optimal for $\theta'$.\footnote{Those authors use weak optimality whereas we use strict optimality; this difference is minor.} Absent indifferences, convex and interval choice are equivalent when $\Theta \subset \Reals$, but they are incomparable when $\Theta$ is multidimensional.\footnote{For example, let $A=\{a',a''\}$ and denote $D(\theta):=u(a',\theta)-u(a'',\theta)$. If $\Theta=\{\theta  \in \Reals^2 : \theta_1+\theta_2=1\}$, then interval choice trivially holds for any $D(\cdot)$ because no two types are ordered; whereas convex choice may fail. On the other hand, if $\Theta=[0,1]^2$ and both $D((0,0))>0$ and $D((1,1))>0$, then interval choice requires $D(\cdot)\geq 0$; whereas convex choice holds if $D(\theta)>0$ for $\theta_1\geq \theta_2$ and $D(\theta)<0$ otherwise.} Convex sets are, of course, an (alternative) salient generalization of one-dimensional intervals to multiple dimensions. We discuss yet another alternative, connected sets, in the \conclusion. 

To define our notions of incentive compatibility, let $N_{\theta}\subset \Theta$ denote an open neighborhood---simply neighborhood hereafter---of type $\theta$ in the relative topology of $\Theta$. Throughout this paper, we focus on direct mechanisms $\Theta \to A$. Stochastic mechanisms are subsumed by taking $A$ to be a lottery space.

\begin{definition}
A mechanism $m:\Theta \to A$ is
\begin{enumerate}
\item \emph{incentive compatible} (IC) if $u(m(\theta),\theta)\geq u(m(\theta'),\theta)$ for all $\theta,\theta'\in \Theta$;
\item \emph{locally IC} if for each $\theta\in \Theta$ there is a neighborhood $N_{\theta}\subset \Theta$ such that 
\begin{equation}
\label{e:lIC}
\forall \theta'\in N_{\theta}: \; u(m(\theta),\theta)\geq u(m(\theta'),\theta) \text{ and } u(m(\theta'),\theta')\geq u(m(\theta),\theta').
\tag{LIC}
\end{equation}
\end{enumerate}
\end{definition}

IC is a standard and fundamental property. Our formulation of local IC follows \citet{Carroll:12}: for every type $\theta$, there is a set of nearby types for which (i) $\theta$ cannot profitably mimic those types and (ii) those types cannot profitably mimic $\theta$. Requirement (ii) here cannot be dispensed with.\footnote{Here is an example. $\Theta=[0,1]$, $A=\{a',a''\}$, $u(a',\cdot)<u(a'',\cdot)$, and $m(\theta)=a'$ if and only if $\theta < 1/2$. For every type $\theta<1/2$, there is $\epsilon>0$ such that all types in $[0,\theta+\epsilon)$ receive the same allocation $a'$. Hence, for every $\theta$, there is a neighborhood in which $\theta$ has no incentive to mimic any type in that neighborhood. Local IC fails, however, because any neighborhood of type $1/2$ contains some types strictly below, all of which would profitably mimic $1/2$. If we did not rule this out in local IC, then the example would contradict \autoref{prop:CC-IC}. \citet[p.~669]{Carroll:12} makes an analogous point in his framework.} Plainly, IC implies local IC (because $\Theta$ is a neighborhood of every type). The converse is not true:
\begin{example}
\label{eg:CCnecessary}
$\Theta=[0,1]$, $A=\{a',a''\}$, $u(a',\cdot)=0$, $u(a'',\theta)=1$ if $\theta\neq 1/2$, and $u(a'',1/2)=0$.  Note that convex choice fails. The mechanism $m(\theta)=a'$ if $\theta \leq  1/2$ and $m(\theta)=a''$ if $\theta>1/2$ is not IC. But it is locally IC: for every $\theta\neq 1/2$, there a neighborhood of $\theta$ on which $m(\cdot)$ is constant; whereas for type $1/2$, neither does it prefer to mimic any type nor does any type prefer to mimic it.
\end{example}

We will see that, in a suitable sense, convex choice is necessary for local IC to imply IC. While it is also sufficient when preferences are strict, the next example shows that indifferences are a threat.

\begin{example}
\label{eg:Thinnecessary}
$\Theta=[0,1]$, $A=\{a',a''\}$, and $u(a,\theta)=\indic\{a=a'',\theta=1\}$.
Despite convex choice, the mechanism $m(\theta)=a''$ if and only if $\theta \leq  1/2$  is not IC.  Yet it is locally IC: only type $\theta=1$ can profitably mimic any other type, but $m(\cdot)$ is constant on $(1/2,1]$.
\end{example}

The problem in \autoref{eg:Thinnecessary} is ``thick'' (but not total) indifferences. The next definition allows us to pinpoint the issue.

\begin{definition}
\label{def:regularindiffs}
$u$ has \emph{regular indifferences}
if for all $a',a'' \in A$ and $\theta',\theta''\in \Theta$ and $\theta \in \ell(\theta',\theta'')\setminus\{\theta',\theta''\}$, 
\[ \Big [u(a',\theta')>u(a'',\theta') \text{ and } u(a',\theta'')= u(a'',\theta'') \Big ] \implies u(a',\theta)>u(a'',\theta).\]
\end{definition}

An implication of regular indifferences is that for any two actions and any line segment of types, either all those types are indifferent or at most one type in the (relative) interior is indifferent. Plainly, regular indifferences holds when there are no indifferences, whereas \autoref{eg:Thinnecessary} is a violation.

\begin{proposition}
\label{prop:CC-IC}
The following are equivalent:
\begin{enumerate}
\item \label{CC:1} $u$ has convex choice and regular indifferences;
\item \label{CC:2} for any line segment $\Theta'\subset \Theta$, if mechanism $m:\Theta'\rightarrow A$ is locally IC, then it is IC.
\end{enumerate}
\end{proposition}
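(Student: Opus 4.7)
For the first implication (convex choice and regular indifferences imply that local IC yields IC on segments), fix a line segment $\Theta'\subset\Theta$ and a locally IC $m:\Theta'\to A$; take $\theta_a,\theta_b\in\Theta'$ and suppose toward contradiction that $u(m(\theta_b),\theta_a)>u(m(\theta_a),\theta_a)$. Parametrize $\ell(\theta_a,\theta_b)$ as $\theta(\lambda):=(1-\lambda)\theta_a+\lambda\theta_b$ and define
\[
P:=\{\lambda\in[0,1]: u(m(\theta(\lambda)),\theta_a)>u(m(\theta_a),\theta_a)\}.
\]
Then $1\in P$, while the first half of local IC at $\theta_a$ rules out a neighborhood of $0$, so $\lambda^*:=\inf P>0$. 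The work is to derive a contradiction at $\lambda^*$, split on whether $\lambda^*\in P$.

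If $\lambda^*\in P$, pick $\lambda'\in(0,\lambda^*)$ with $\theta(\lambda')\in N_{\theta(\lambda^*)}$. Since $\lambda'\notin P$, $\theta_a$ strictly prefers $m(\theta(\lambda^*))$ to $m(\theta(\lambda'))$, and local IC at $\theta(\lambda^*)$ gives $u(m(\theta(\lambda^*)),\theta(\lambda^*))\geq u(m(\theta(\lambda')),\theta(\lambda^*))$. If this is strict, convex choice applied to $\{m(\theta(\lambda^*)),m(\theta(\lambda'))\}$ places the intermediate point $\theta(\lambda')\in\ell(\theta_a,\theta(\lambda^*))$ in the strict preference set for $m(\theta(\lambda^*))$; but local IC also delivers $u(m(\theta(\lambda')),\theta(\lambda'))\geq u(m(\theta(\lambda^*)),\theta(\lambda'))$, a contradiction. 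If the inequality above is an equality, regular indifferences on $\ell(\theta_a,\theta(\lambda^*))$ supplies the same strict preference at $\theta(\lambda')$ and hence the same contradiction. If instead $\lambda^*\notin P$, extract $\lambda_n\downarrow\lambda^*$ with $\lambda_n\in P$ (so eventually $\theta(\lambda_n)\in N_{\theta(\lambda^*)}$): now $\theta_a$ strictly prefers $m(\theta(\lambda_n))$ to $m(\theta(\lambda^*))$, and local IC at $\theta(\lambda^*)$ gives $u(m(\theta(\lambda_n)),\theta(\lambda_n))\geq u(m(\theta(\lambda^*)),\theta(\lambda_n))$. The symmetric argument---with regular indifferences upgrading the equality case---forces $\theta(\lambda^*)\in\ell(\theta_a,\theta(\lambda_n))$ into the strict preference set for $m(\theta(\lambda_n))$, contradicting the other inequality of local IC at $\theta(\lambda^*)$.

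For the second implication I prove the contrapositive. If convex choice fails via $a',a''\in A$ and $\theta_1,\theta_2\in\Theta$ with $u(a',\theta_i)>u(a'',\theta_i)$ for $i=1,2$ but some interior $\theta^*\in\ell(\theta_1,\theta_2)$ at which $a'$ is not uniquely optimal in $\{a',a''\}$, I construct $m:\ell(\theta_1,\theta_2)\to\{a',a''\}$ that assigns $a''$ on a middle sub-interval bracketing $\theta^*$---with boundaries placed at the outer extremes of the two maximal components of $\{\theta: u(a',\theta)>u(a'',\theta)\}$ along the segment---and $a'$ outside. The boundary strict preferences ensure local IC, while the middle's (weak) preference for $a''$ produces the IC violation. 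A failure of regular indifferences is handled analogously via a thick-indifference construction in the spirit of \autoref{eg:Thinnecessary}.

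The main obstacle is the $\lambda^*\notin P$ case of the first implication, where one must approximate from above and simultaneously invoke convex choice and regular indifferences to rule out an indifference ``escape'' at the limit; without regular indifferences the equality case survives and breaks the contradiction. A secondary delicacy for the second implication is that, since $u$ need not be continuous, the transition points of $u(a',\cdot)-u(a'',\cdot)$ along the segment may not be attained, requiring care in choosing the middle region's boundaries.
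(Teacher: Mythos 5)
Your first implication is correct. Organizing the argument around $\lambda^*=\inf P$ rather than a finite subcover is a legitimate alternative to the paper's chain argument, but the engine is identical: at the critical point you use local IC to get the two inequalities, then convex choice (strict case) or regular indifferences (equality case) to push the strict preference of $\theta_a$ onto the intermediate type and contradict the other half of local IC. Both of your cases check out, including the verification that the intermediate type is in the relative interior of the relevant segment so that regular indifferences applies.

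The second implication has a genuine gap in the convex-choice branch. Your symmetric construction---$a''$ on a middle sub-interval bracketing $\theta^*$, $a'$ outside---does not in general violate IC. Take the simplest failure of convex choice: $D:=u(a',\cdot)-u(a'',\cdot)$ satisfies $D>0$ on $[0,0.4)\cup(0.6,1]$ and $D\le 0$ on $[0.4,0.6]$ (parametrizing the segment by $[0,1]$). Here $\{D>0\}$ has exactly two components, whose natural "boundaries'' give the middle region $[0.4,0.6]$; your mechanism then assigns every type a weakly optimal action from $\{a',a''\}$ and is therefore fully IC---"the middle's weak preference for $a''$'' produces no violation, since a violation requires some type to be assigned a \emph{strictly} suboptimal action. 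If instead you enlarge the middle region to swallow part of a strict-$a'$ component, local IC breaks at the new interface: the interface type gets $a'$ while adjacent types who strictly prefer $a'$ get $a''$, so requirement (ii) of \eqref{e:lIC} fails at the interface type. The construction that works is asymmetric, as in the paper: assign each type in $\ell(\theta_1,\theta^*)\setminus\{\theta^*\}$ its \emph{optimal} action from $\{a',a''\}$ (so $\theta_1$ gets $a''$ in the paper's labeling) and assign the constant action that is weakly optimal at $\theta^*$ on all of $\ell(\theta^*,\theta_2)$. Then $\theta_2$ is assigned a strictly suboptimal action and envies $\theta_1$, while local IC survives because the only interface is at $\theta^*$, where the constant action is weakly optimal and the left neighbors are already at their optimum. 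Your regular-indifferences branch, being "handled analogously,'' inherits the same problem; the paper's fix there is a one-point modification of the asymmetric construction ($m(\theta_1)=a''$ at the indifferent endpoint).
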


(All proofs are in the appendices.)

The proposition says that convex choice and regular indifferences jointly characterize when 
IC between any two types $\theta$ and $\theta'$ can be determined by just checking local IC along the line segment $\ell(\theta,\theta')$. Such ``integration up'' is a common strategy used to verify IC. 

A corollary of \autoref{prop:CC-IC} is that convex choice and regular indifferences imply that on the full type space, local IC is sufficient for IC.\footnote{For, local IC on the full type space implies local IC on every line segment. Formally, if $m:\Theta \to A$ is locally IC, then for every $\ell(\theta,\theta')$, the mechanism $m':\ell(\theta,\theta')\to A$ defined by $m'(\theta)=m(\theta)$ is locally IC.}  The necessity of convex choice in \autoref{prop:CC-IC} owes to our seeking sufficiency of local IC on not only the full type space, but also lower-dimensional subsets---in particular, all line segments. We find this a natural desideratum: a ``tractable'' problem must remain tractable when restricted to lower dimensions. \autoref{fig:discs} illustrates why lower-dimensional problems are essential to \autoref{prop:CC-IC}.  The figure's example violates convex choice, yet it can be shown that on the full type space, any locally IC mechanism is IC (see \autoref{prop:connected-IC} in the \suppapp). However, as shown in the figure, there are mechanisms in which local IC holds along the line segment $\ell(\theta_1,\theta_2)$ even though $\theta_2$ would mimic $\theta_1$.

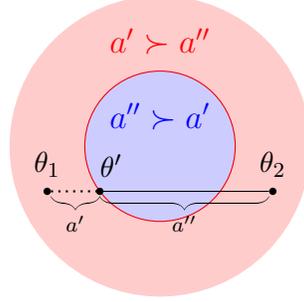
\begin{figure}
\centering
\begin{tikzpicture}[scale=2]
  \fill[red!20!white] (0,0) circle (1);
  \fill[blue] (0,0) circle (0.5);
    \clip (0,0) circle (1);
    \fill[blue!20!white, draw=red] (0,0) circle (0.5);
   \coordinate (endpoint) at (-.40,-.3);
   \node[label={[above, label distance=0pt, xshift=1.5mm]: \small $\theta'$},circle,fill=black,scale=0.25] (tprime) at (endpoint) {} ;
   \node[label={[above, label distance=0pt]:\small $\theta_1$},circle,fill=black,scale=0.25] (t1) at (-.75,-.3) {} ;
   \node[label={[above, label distance=0pt]: \small $\theta_2$},circle,fill=black,scale=0.25] (t2) at (.75,-.3) {} ;
  \draw (endpoint) -- (t2) ;
  \draw[dotted,thick] (t1) -- (endpoint);
  \draw [decorate,decoration={calligraphic brace,amplitude=5pt,mirror,raise=2pt},yshift=10pt]
  (endpoint) -- (t2) node[midway,below=5pt,black,font=\scriptsize] {$a''$};
  \draw [decorate,decoration={calligraphic brace,amplitude=5pt,mirror,raise=2pt},yshift=10pt]
  (t1) -- (endpoint) node[midway,below=5pt,black,font=\scriptsize] {$a'$};
    \node[red,scale=1.05] at (0.0,0.7) {$a' \succ a''$};
    \node[blue,scale=1.05] at (0.0,0.2) {$a'' \succ a'$};
\end{tikzpicture}
\caption{$\Theta$ is the large disc and $A=\{a',a''\}$. Blue (inner disc) types prefer $a''$ to $a'$; red (outer region) types prefer $a'$ to $a''$. A mechanism that allocates $a'$ on the dotted line segment (including to $\theta'$) and $a''$ on the solid one is locally IC, but not IC, on the line segment from $\theta_1$ to $\theta_2$. Any such mechanism is not locally IC on $\Theta$.
}
\label{fig:discs}
\end{figure}

Examples \ref{eg:CCnecessary} and \ref{eg:Thinnecessary} are illustrative of the general argument for why both convex choice and regular indifferences are necessary for \autoref{prop:CC-IC}'s second statement. Here is a heurstic sketch for why convex choice is sufficient when there are no indifferences. Take any $\theta$ and $\theta'$. Consider a fine grid of types $\theta=\theta_1,\ldots,\theta_m=\theta'$ traversing the line segment $\ell(\theta,\theta')$. We can regard local IC as implying $u(m(\theta),\theta)\geq u(m(\theta_2),\theta)$, $u(m(\theta_2),\theta_2)\geq u(m(\theta_3),\theta_2)$, and $u(m(\theta_3),\theta_3)\geq u(m(\theta_2),\theta_3)$. Absent indifferences, all those inequalities are strict.
Convex choice then implies $u(m(\theta),\theta)>u(m(\theta_3),\theta)$; otherwise, both $\theta$ and $\theta_3$ strictly (absent indifferences) prefer $m(\theta_3)$ to $m(\theta_2)$, hence so must $\theta_2$, a contradiction. Iterating this logic, using the combination of local IC and convex choice each time, yields $u(m(\theta),\theta)>u(m(\theta_i),\theta)$ for all $i=2,\ldots,m$. Consequently,  $u(m(\theta),\theta)>u(m(\theta'),\theta)$.

We can now compare our approach with that of \citet{Carroll:12} in his ``cardinal type space'' analysis.\footnote{He also considers ``ordinal type spaces'', which are less comparable to our framework.} \autoref{prop:CC-IC} subsumes 
\citeauthor{Carroll:12}'s Proposition 1. To see why, recall that \citeauthor{Carroll:12}
assumes each $a\in A \subset \Reals^n$ is a probability vector over a finite set of $n$ outcomes, and he associates types with their utility vectors over those outcomes. (He does not distinguish between types with the same preferences.) We subsume his framework by setting $u(a,\theta)=a \cdot \theta$. Our notion of local IC (and IC) is then equivalent to his. 
\citeauthor{Carroll:12}'s Proposition 1 says that given convexity of $\Theta$, local IC implies IC. The conclusion follows from our \autoref{prop:CC-IC} because $u$ in this specification has convex choice and regular indifferences. Our result is more general; for instance, we do not require finite outcomes. But more importantly, we view our perspective as complementary to \citeauthor{Carroll:12}'s, and perhaps more familiar: rather than equating types with preferences and analyzing properties of the type space, we fix an abstract (convex Euclidean) type space and seek properties of the utility function to guarantee sufficiency of local IC.

\begin{remark}
\label{rem:CC-otherreasons}
    Although we emphasize its connection with the sufficiency of local IC, convex choice is a notable property for other reasons as well. For instance, in cheap-talk or costly-signaling applications, it is natural to ask whether, in equilibrium, the set of types that choose a particular signal is convex. Modulo details about indifferences, that is equivalent to the equilibrium having convex choice for the sender.\footnote{In their study of multidimensional cheap talk, \citet{LR:04}, a working paper version of \citet{LR:07}, assume $\Theta=A= \Reals^2$ and $u(a,\theta)=
    \left(\sum_{i \in\{1, 2\}} \alpha_{(i)}\left|a_{(i)}-\left(b_{(i)}+\theta_{(i)}\right)\right|^p\right)^{\frac{1}{p}}$ for some parameters $\alpha,b\in \Reals^2$. It can be checked that this specification violates convex choice unless $p=2$. Consistent with that, \citet{LR:04} do not show that all equilibria are ``convex partitional''; rather, in their Section 4.1, they identify parameters under which they can construct some such equilibria. In other working paper versions, the authors assume $p=2$ and show that in this case all equilibria are convex partitional.}
\end{remark}

\section{Directional Single Crossing}
\label{sec:DSCD}
To unpack which preferences have convex choice, we next develop a connection with single crossing.

\begin{definition} \label{def:DSC} A function $f:\Theta\rightarrow \R$
\begin{enumerate}
\item  is \emph{directionally single crossing (DSC)} if there is $\alpha\in\R^n\setminus\{0\}$ such that for all $\theta,\theta'\in \Theta$, 
\begin{align*}
    (\theta-\theta')\cdot \alpha \ge 0 \implies \sign\left(f(\theta)\right)\ge \sign\left(f(\theta')\right);
\end{align*} 
\item \label{def:DSC2} \emph{strictly violates DSC} if for all $\alpha\in \R^n\setminus\{0\}$, there are $\theta,\theta',\theta''\in \Theta \text{ with } \alpha\cdot \theta< \alpha\cdot \theta' < \alpha\cdot \theta''$ such that 
\[\min\{f(\theta), f(\theta'')\}>0>f(\theta') \text{ \ or \ } \min\{f(\theta), f(\theta'')\}<0<f(\theta').\]
\end{enumerate}

\end{definition}

DSC says that for any $\theta$ and $\theta'$ such that $\theta$ is to the right of the hyperplane passing through $\theta'$ in the direction of $\alpha$ (i.e., with normal vector $\alpha$), we have $\sign\left(f(\theta)\right)\geq \sign\left(f(\theta')\right)$. Hence, we will sometimes say more explicitly that a function is DSC in the direction $\alpha$. DSC is equivalent to the sets $\{\theta:f(\theta)<0\}$ and $\{\theta:f(\theta)>0\}$ being either empty, full (i.e., all of $\Theta$), or half-spaces defined by parallel (possibly identical) hyperplanes.
See \autoref{fig:DSC}: the left panel depicts a typical DSC function with a ``thin'' zero set, illustrating the geometry of the definition; the middle panel depicts a DSC function with a ``thick'' zero set; and the right panel depicts a violation of DSC.

\begin{figure}[h]
    \centering
    \begin{subfigure}{0.3\textwidth}
        \centering
        
    \begin{tikzpicture}
    
    \draw[thick,red] (-1,-2) -- (1.5,1.5) node[right] {$f(\cdot)=0$};

    \coordinate (theta_prime) at (0.25, -0.25); 
    \fill (theta_prime) circle (2pt) node[left] {$\theta'$};

    \coordinate (theta) at (3, 1);
    \fill (theta) circle (2pt) node[right] {$\theta$};

    \draw[->, thick, blue] (theta_prime) -- (theta) node[midway, below right, inner sep=1pt] {$\theta - \theta'$};

    \draw[thick,->,blue] (theta_prime) -- ++(1.4, -1) node[above right] {$\alpha$};
    \draw[rotate around={-125:(theta_prime)}] (theta_prime) rectangle ++(0.2,0.2);

    \node at (.2,1.2) {$f(\cdot)<0$};
    \node at (.5,-1.5) {$f(\cdot)>0$};
    \end{tikzpicture}
        \caption{DSC}
        \label{fig:DSC1}
    \end{subfigure}
    \textcolor{gray}{\vrule width 1.5pt}
    \quad
    \begin{subfigure}{0.3\textwidth} 
    \begin{tikzpicture}    
        \coordinate (shifter) at (0.6, -0.6/1.4); 
    
        \fill[red!10!white] (-1, -2) -- (1.5, 1.5) -- ($( 1.5, 1.5 ) + (shifter)$) -- ($( -1, -2 ) + (shifter)$) -- cycle;

        \draw[thick,black] (-1,-2) -- (1.5,1.5);
        \draw[thick,red] ($( -1, -2 ) + (shifter)$) -- ($( 1.5, 1.5 ) + (shifter)$);

        \path ($(1.15, -0.25)!0.5!($(0.25, -0.25) + (shifter)$)$) coordinate (labelpos);
        \node[rotate=55, anchor=south, text=red] at (labelpos) {$f(\cdot)=0$};
    
        \node at (-0.2,.75) {$f(\cdot)<0$};
        \node at (1.5,-1.5) {$f(\cdot)>0$};
    \end{tikzpicture}
        \caption{DSC with thick zero set}
        \label{fig:DSC2}
    \end{subfigure}
    \textcolor{gray}{\vrule width 1.5pt}
    \begin{subfigure}{0.3\textwidth}
        \centering
        \begin{tikzpicture}            
            \draw[thick,red]  (-1,-2) .. controls (-.5,.5)  .. (1.5,1.5) node[right] {$f(\cdot)=0$};
        
            \node at (-.5,1.3) {$f(\cdot)<0$};
            \node at (.5,-.5) {$f(\cdot)>0$};
        \end{tikzpicture}
        \caption{Not DSC}
        \label{fig:DSC3}
    \end{subfigure}
    \caption{Directional single crossing with $\Theta \subset \Reals^2$.}
    \label{fig:DSC}
\end{figure}

 Regarding part \ref{def:DSC2} of \autoref{def:DSC}, note that a strict violation of DSC is slightly stronger than just violating DSC. To illustrate, consider $\Theta=[-1,1]$ and $f(\theta)=\sign(|\theta|)$. This function is not DSC because its sign is not monotonic. But it does not strictly violate DSC. If, instead, $f(0)<0$ (maintaining  $f(\theta)>0$ for $\theta\neq 0$), then it does strictly violate DSC. The violation of DSC in \autoref{fig:DSC3} is also a strict violation.

\begin{definition}The utility function $u$
\begin{enumerate}
\item has \emph{directionally single-crossing differences (DSCD)} if for all $a,a'\in A$, the difference $D_{a,a'}(\theta):=u(a,\theta)-u(a',\theta)$ is DSC;
\item \emph{strictly violates DSCD} if there are 
$a,a'\in A$ such that $D_{a,a'}(\theta)$ strictly violates DSC.
\end{enumerate}
\end{definition}

DSCD can be viewed as saying that for any pair of actions $a$ and $a'$, the strict-preference sets $\{\theta:u(a,\theta)>u(a',\theta)\}$ and $\{\theta:u(a,\theta)<u(a',\theta)\}$ are parallel half-spaces, each either open or closed, intersected with the type space. 
Importantly, the direction of the hyperplanes defining these half-spaces can vary across action pairs. 

Here are two leading families of DSCD when $A \subset \Reals^n$: (i) weighted Euclidean preferences, where  $u(a,\theta)$ is any decreasing function of $(a-\theta)\cdot W \cdot (a-\theta)$ with $W$ a $n\times n$ symmetric positive definite matrix; and (ii) constant-elasticity-of-substitution (CES) preferences, where $A,\Theta\subset \Reals^n_{+}$ and $u(a,\theta)=\left(\sum_{i=1}^n (a_{(i)})^r\theta_{(i)} \right)^{s}$ with parameters $r\in \Reals$ and $s>0$.\footnote{Family (i) satisfies DSCD with the direction $\alpha = W\cdot (a-a')$ and family (ii) with $\alpha = a^r-(a')^r$.} In either case, adding a type-independent function $w(a)$ preserves DSCD.\footnote{More generally, DSCD may not be preserved by such addition. But the current utility families don't just have DSCD, they have directionally monotone differences: for any pair of actions, the utility difference is monotonic---rather than just single crossing---in the relevant direction.} With that addition, the CES family subsumes $u(a,\theta)=a \cdot \theta + w (a)$, which is frequently used in multidimensional mechanism design, and which we will return to.

 \begin{proposition}
 \label{prop:CC-DSCD}
 If $u$ has DSCD, then $u$ has convex choice. If $u$ strictly violates DSCD, then $u$ does not have convex choice.
 \end{proposition}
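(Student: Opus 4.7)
My plan is to prove each implication separately. For the first (DSCD implies convex choice), I would rewrite the set of types for which a given $a$ is uniquely optimal in $B$ as
\[
\left\{\theta\in\Theta : \{a\}=\argmax_{b\in B} u(b,\theta)\right\} = \bigcap_{b\in B\setminus\{a\}} \{\theta\in\Theta: D_{a,b}(\theta)>0\}.
\]
By DSCD, each factor is the strict-positivity set of a DSC function, and hence---by the geometric characterization noted just after \autoref{def:DSC}---is either empty, all of $\Theta$, or a half-space (open or closed) of $\Theta$. Each such set is convex, and intersections of convex sets are convex, so convex choice follows. This direction is essentially a direct unpacking of definitions.

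For the second implication, suppose $u$ strictly violates DSCD via actions $a,a'$ whose difference $D_{a,a'}$ strictly violates DSC. I would take $B=\{a,a'\}$ and argue that at least one of $P:=\{\theta: D_{a,a'}(\theta)>0\}$ (where $a$ is uniquely optimal) and $N:=\{\theta: D_{a,a'}(\theta)<0\}$ (where $a'$ is uniquely optimal) fails to be convex. Note that the strict-violation property forces values of both signs to be attained, so both $P$ and $N$ are nonempty. Suppose, toward a contradiction, that both are convex. Since they are disjoint, the separating hyperplane theorem produces $\alpha\in\R^n\setminus\{0\}$ and $c\in\R$ such that $\alpha\cdot\theta\ge c$ for all $\theta\in P$ and $\alpha\cdot\theta\le c$ for all $\theta\in N$. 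Feeding this particular $\alpha$ into the definition of strict violation yields a triple $\theta_1,\theta_2,\theta_3\in\Theta$ with $\alpha\cdot\theta_1<\alpha\cdot\theta_2<\alpha\cdot\theta_3$ in which $\theta_2$'s sign of $D_{a,a'}$ is opposite to that shared by $\theta_1$ and $\theta_3$. Either sign pattern immediately contradicts the separation inequalities: if $\theta_1,\theta_3\in P$ and $\theta_2\in N$, then $c\le\alpha\cdot\theta_1<\alpha\cdot\theta_2\le c$; the symmetric chain handles the other case.

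The main obstacle is the second direction, and the crux is the separating-hyperplane step: a failure of convex choice on a binary set is exactly the non-convexity of $P$ or $N$, which via separation furnishes the direction $\alpha$ along which DSC ought to hold but is contradicted by the strict-violation triple. The ``strict'' qualifier in the hypothesis is essential: without it, a weak DSC failure such as an isolated indifference point between two oppositely-signed regions could leave both $P$ and $N$ convex (even half-spaces), so the weak-separation contradiction would not materialize. Beyond setting up this step, the remainder of the argument is routine bookkeeping.
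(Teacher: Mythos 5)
Your proof is correct and follows essentially the same route as the paper's: the first direction reduces to convexity of the strict-preference sets implied by DSC (the paper verifies this along line segments, you invoke the equivalent half-space characterization), and the second direction hinges on the same separating-hyperplane argument applied to the two disjoint convex strict-preference sets, which you state as a direct contradiction rather than the paper's contrapositive. No substantive differences or gaps.
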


In light of its half-spaces interpretation, DSCD implying convex choice is straightforward. Conversely, a separating hyperplane theorem yields that under convex choice there is no strict violation of DSCD. The upshot is that DSCD ``almost'' characterizes convex choice.\footnote{\appendixref{sec:strictviolation} gives examples showing that even absent indifferences, neither does convex choice imply DSCD nor does a failure of convex choice imply a strict violation of DSCD.}

\autoref{prop:CC-DSCD} is closely related to a result of \citet[p.~322]{Grandmont:78}. His condition (H.2) is a little stronger than imposing both convex choice and regular indifferences. Combined with a continuity condition---\citeauthor{Grandmont:78}'s (H.1)---and an assumption that the convex type space is open, he obtains a characterization that is similar to, but more restrictive than, DSCD. See \appendixref{sec:Grandmont} for details. The leading families of DSCD given above satisfy \citeauthor{Grandmont:78}'s characterization. A one-dimensional example of DSCD that does not is $\Theta=(0,1)$, $A=\{a',a''\}$, $u(a',\cdot)=0$, $u(a'',\theta)=-1$ if $\theta<1/2$ and $u(a'',\theta)=1$ if $\theta\geq 1/2$; \citeauthor{Grandmont:78}'s continuity requirement fails here.

If $\Theta \subset \Reals$, then modulo indifferences, DSCD is equivalent to \citepos{KLR:23} ``single-crossing differences'' and 
\autoref{prop:CC-DSCD} is equivalent to part 1 of their Theorem 1. But in multiple dimensions the notions are distinct, and their result is not about convex choice. Similarly,  \citepos{MS:94} single-crossing preferences do not guarantee convex choice when $\Theta$ is multidimensional.\footnote{Neither does \citepos{BS:18} ``$i$-single crossing property''. Their notion is motivated by \citepos{Quah:07} weakening of \citepos{MS:94} comparisons of constraint/choice sets. Convex choice considers all choice sets.}

\citet{MM:88} present a notion of ``generalized single crossing''. Their condition is formulated for a Euclidean allocation space and twice-differentiable utilities. For differentiable mechanisms, they establish that their condition implies that checking local IC along line segments is sufficient. \appendixref{sec:MM} provides an example demonstrating that \citeauthor{MM:88}'s condition does not imply convex choice, and hence does not imply DSCD. Consequently, as displayed explicitly in our example, there are non-differentiable mechanisms for which their condition does not yield sufficiency of local IC.

\subsection{Convex Environments}

In various settings, the agent may be choosing among lotteries. For instance, a mechanism designer may consider stochastic mechanisms; indeed, the revelation principle in general requires allowing for those. Or, the mechanism may take inputs from other agents with private information, so that from the interim point of view of one agent, her report induces a lottery. Alternatively, in a cheap-talk application, the receiver's decision may be determined by both the sender's message and the receiver's private preference type; \citet{KLR:23} detail this application with $\Theta,A \subset \Reals$.

Following \citet{KLR:23}, we say that the environment $(A,\Theta,u)$ is \emph{convex} if
\begin{equation} 
\tag{$\star$}
\text{ the set of functions } \{u(a,\cdot):\Theta\to \Reals\}_{a \in A} \text{ is convex.}
\label{star}
\end{equation}
That is, a convex environment has the property that for any pair of actions and any weighting, there is a third action that replicates the weighted sum of utilities of the original actions.
Note that convexity is a property of the utility function rather than of $A$. However, if $A$ is convex, then it is straightforward that \eqref{star} is assured by linearity of $u$ in its first argument. Expected utility over a convex set of lotteries (e.g., all lotteries) is thus a convex environment.  In fact, rank-dependent expected utility also produces a convex environment \citep[Example 2]{KLR:23}. An example without lotteries is $A=[\underline a_1,\overline a_1]\times \ldots \times [\underline a_k,\overline a_k]\subset \Reals^k$ and $u(a, \theta)= \sum_{i=1}^k g_i\left(a_{(i)}\right)f_i(\theta)$ with arbitrary functions $\left(f_i\right)_{i=1}^k$ and continuous functions $\left(g_i\right)_{i=1}^k$.

The following concepts are useful to elucidate the implications of DSCD in convex environments.  We say that the utility function $\tilde u(a,\theta)$ \emph{represents} $u(a,\theta)$ if 
$\tilde{u}(a,\theta)=h(u(a,\theta),\theta)$ for some function $h$ such that for all $\theta$, $h(\cdot,\theta):\Reals\to \Reals$ is strictly increasing. This is simply the standard notion of (type-dependent) preference representation. An \emph{affine representation} is any representation that is affine in $\theta$, i.e., has the form $v(a)\cdot \theta+w(a)$ for some $v:A\to \Reals^n$ and $w:A\to \Reals$.
We say that $u$ is \emph{one-dimensional} if there are 
$\alpha \in \Reals^n\setminus\{0\}$ and $\tilde u: A\times\Reals \rightarrow \Reals$ such that $u(a,\theta)\ge (>) u(a',\theta)$ if and only if $\tilde u(a,\alpha\cdot \theta)\ge (>) \tilde u(a',\alpha\cdot \theta)$.\footnote{For a DSCD preference, this is equivalent to the utility difference between every pair of actions satisfying DSC in the common direction $\alpha$.}  In other words, any type $\theta$'s preferences are fully determined by the one-dimensional sufficient statistic $\alpha \cdot \theta$. A type $\theta$ is \emph{totally indifferent} if $u(a,\theta)=u(a',\theta)$ for all $a,a'\in A$.

Regardless of a convex environment, any utility function with an affine representation satisfies DSCD. For, any affine utility clearly satisfies DSCD, and DSCD is an ordinal property that is preserved by any representation. If preferences are one-dimensional, then even in a convex environment there are DSCD utilities that do not have an affine representation: for example,  
$\Theta=[1,2]$, $A=\Delta\left(\{x_0,x_1,x_2\}\right)$, and for any lottery $a\in A$, $u(a,\theta)=\sum_{i=0}^2 a(x_i)v(x_i,\theta)$ with 
$v(x_0,\theta)=0$, $v(x_1,\theta)=\theta$ and $v(x_2,\theta)=\theta^2$.\footnote{DSCD holds because the utility difference between any pair of lotteries is a linear combination of $\theta$ and $\theta^2$, which has at most one root in $\Theta$. To see that there is no affine representation, observe that because $u$ is an expected utility, any representation $\tilde u$ must be a type-dependent positive affine transformation of $u$, i.e., have the form $\tilde u(a,\theta)=b(\theta) u(a,\theta)+c(\theta)$ with $b(\cdot)>0$. But no functions $b$ and $c$ can make both $b(\theta)\theta +c(\theta)$ and $b(\theta)\theta^2 +c(\theta)$  affine.}
The following result says that under some additional assumptions, these two cases---one-dimensional preferences and those with an affine representation---exhaust DSCD in a convex environment.

\begin{proposition}
\label{prop:DSCDstar}
    Assume $\Theta=\R^n$, $u(a,\theta)$ is differentiable in $\theta$, and no type is totally indifferent.\footnote{\label{fn:totalindiff}As explained in the proof, instead of no totally indifferent type, it is enough to assume that for any type $\theta$ that is totally indifferent, there are two actions $a'$ and $a''$ such that $\nabla[u(a',\theta)-u(a'',\theta)]\neq 0$.}
    If the environment is convex and $u$ satisfies DSCD, then either $u$ is one-dimensional or it has an affine representation.
\end{proposition}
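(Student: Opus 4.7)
The plan is to dichotomize on whether all DSCD directions $\{\alpha_{a,a'}\}$ (over pairs with $u(a,\cdot)\neq u(a',\cdot)$) are parallel to a common $\alpha\in\R^n\setminus\{0\}$. In the parallel case, DSCD implies $\sign D_{a,a'}(\theta)$ depends only on $\alpha\cdot\theta$ for every pair, so the preference ordering at any $\theta$ is determined by the scalar $\alpha\cdot\theta$; setting $\tilde u(a,s):=u(a,s\alpha/\Vert\alpha\Vert^2)$ witnesses one-dimensionality.

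In the non-parallel case, fix a reference $a_0\in A$, write $D_a(\theta):=u(a,\theta)-u(a_0,\theta)$, and pick $a_1,a_2\in A$ with $\alpha_{a_1,a_0}$ and $\alpha_{a_2,a_0}$ linearly independent. By convex environment, the 2-parameter family $\{a^\lambda\}$ defined by $u(a^\lambda,\cdot)=\lambda_1 u(a_1,\cdot)+\lambda_2 u(a_2,\cdot)+(1-\lambda_1-\lambda_2)u(a_0,\cdot)$ lies in $A$. Differences within this family are the linear combinations $sD_{a_1}+tD_{a_2}$ with $(s,t)$ ranging over an open neighborhood of the origin of $\R^2$, so DSCD forces all such combinations to be DSC; by positive scale-invariance of DSC, every nonzero linear combination $sD_{a_1}+tD_{a_2}$ is then DSC.

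The technical heart is the following lemma: if $f,g:\R^n\to\R$ are differentiable, DSC in linearly independent directions, and every nonzero $sf+tg$ is DSC, then (under the no-total-indifference hypothesis for the induced preferences) there exist $A:\R^n\to(0,\infty)$ and affine functions $\tilde f,\tilde g$ of $\theta$ with $f=A\tilde f$ and $g=A\tilde g$. Granting the lemma, $u(a^\lambda,\theta)=u(a_0,\theta)+A(\theta)(v(a^\lambda)\cdot\theta+w(a^\lambda))$ with $v,w$ affine in $\lambda$, so $\tilde u(a,\theta):=v(a)\cdot\theta+w(a)=(u(a,\theta)-u(a_0,\theta))/A(\theta)$ is an affine representation on the 2-parameter family via the strictly-increasing-in-first-argument $h(x,\theta):=(x-u(a_0,\theta))/A(\theta)$. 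Extending to all of $A$ is done pairwise: for each $a\in A$, apply the lemma to $(D_a,D_{a_i})$ and match the common factor $A(\theta)$ across pairs via convex environment.

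The main obstacle is the lemma itself. Without the no-total-indifference hypothesis, examples such as $f(\theta)=\theta_1^3,\ g(\theta)=\theta_2^3$ on $\R^2$ satisfy every hypothesis but admit no such decomposition---precisely because at the origin all gradients vanish, so that point is a totally indifferent type with vanishing gradients for every induced difference. The stated hypothesis (or its weakening in footnote~\ref{fn:totalindiff}) rules this out, and combined with differentiability one can build the decomposition by linearizing along the codimension-$2$ common zero set $\{f=0\}\cap\{g=0\}$---where $\nabla f,\nabla g$ must lie in $\mathrm{span}(\alpha_f,\alpha_g)$ because tangential derivatives vanish by DSC---and then globalizing the normalization $A(\theta)$ through $\R^n$ using the convex-environment linearity of utility-combinations.
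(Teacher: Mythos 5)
Your proposal follows essentially the same route as the paper: the dichotomy on whether all pairwise differences admit a common DSC direction, the use of the convex environment to make every linear combination of normalized differences DSC (the paper's ``DSC-preserving'' property), and the decomposition $f=A\tilde f$ with a common positive factor $A(\theta)$ and affine $\tilde f$ is exactly the paper's construction of an affine \emph{representative} via the rescaling $\tilde f(a,\theta)=f(a,\theta)\,\frac{\nabla f_0(\theta_0)\cdot(\theta-\theta_0)}{f_0(\theta)}$ at a common zero $\theta_0$ of two non-parallel differences. Your diagnosis of the role of each hypothesis ($\Theta=\R^n$ for the hyperplane zero sets to intersect, differentiability for the gradient formula at $\theta_0$, no total indifference to escape the degenerate case $\nabla f_0(\theta_0)=0$, with $\theta_1^3,\theta_2^3$ as the obstruction) matches the paper's Lemmas A.1--A.3. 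Two caveats: (i) your ``technical heart'' lemma is asserted with only a two-line sketch, and it is precisely where the paper's proof does all its work (establishing that the zero sets are full hyperplanes, deriving $f_1(\theta)=f_0(\theta)\frac{\nabla f_1(\theta_0)\cdot(\theta-\theta_0)}{\nabla f_0(\theta_0)\cdot(\theta-\theta_0)}$ from DSC of a well-chosen combination vanishing at $\theta_0$ and $\theta$, and patching the definition of $A(\theta)$ across the zero sets of $f_0$ and $f_1$); (ii) the ``pairwise'' extension to all of $A$ needs extra arguments for actions $a$ whose difference $D_a$ is sign-constant or vanishes at $\theta_0$, since for those the lemma does not apply to the pair $(D_a,D_{a_i})$ directly---the paper treats these as separate sub-cases by forming auxiliary affine combinations $\tilde f_3$ with a fresh common zero. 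Neither caveat reflects a wrong idea, only that the proposal stops where the real proof begins.
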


\paragraph{Proof sketch.} For convenience, we pick an arbitrary action $a^*$ and normalize $u(a^*,\theta)=0$ for all $\theta$.  Analogous to \citet{KLR:23}, DSCD then implies that $u$ is linear-combinations DSC-preserving: for all finite index sets $I$, $\{a_i\}_{i\in I}\subset A$, and $\{\lambda_i\}_{i\in I}\subset \Reals$, it holds that $\sum_{i\in I}\lambda_i u(a_i,\cdot)$ is DSC.

We first observe that if preferences are not one-dimensional, then there are actions $a_0$ and $a_1$ such that $u(a_0,\cdot)$ and $u(a_1,\cdot)$ are not DSC in a common direction.  We show that this implies that the zero sets of the two functions are non-parallel hyperplanes. Consequently, since $\Theta=\R^n$, there is $\theta_0$ such that $u(a_0,\theta_0)=u(a_1,\theta_0)=0$.  Moreover, we can assume $u(a_0,\cdot)$ is affine: because it is DSC with zero set a hyperplane, there is a representation in which it is affine.\footnote{Our assumption that no type is totally indifferent ensures that we can find such a representation that remains differentiable in $\theta$.} 

For any $\theta_1$ with $u(a_0,\theta_1)\neq 0$ and $u(a_1,\theta_1)\neq 0$, there is a nontrivial linear combination of $u(a_0,\cdot)$ and $u(a_1,\cdot)$ that is zero at $\theta_1$; the choice of $\theta_0$ and DSC then imply that the line segment $\ell(\theta_0,\theta_1)$ is in the zero set of the linear combination. Since $u(a_0,\cdot)$ is affine, this implies that $u(a_1,\cdot)$ is affine on $\ell(\theta_0,\theta_1)$. As $\theta_1$ is largely arbitrary, we can use the differentiability of $u(a_1,\cdot)$ to further establish that it must in fact be globally affine.\footnote{Dealing with the zero sets of either $u(a_0,\cdot)$ and $u(a_1,\cdot)$ requires some care, as $\theta_1$ cannot be in either zero set. Moreover, because we changed the representation of $u$ to make $u(a_0,\cdot)$ affine, it is not guaranteed that $u(a_1,\cdot)$ in the chosen representation is differentiable; the formal proof must handle this obstacle.}

Finally, consider any other action $a_2$; for simplicity, suppose it is neither dominated by nor dominates $a_0$, i.e., $u(a_2,\theta)> 0$ for some $\theta$ and $u(a_2,\theta')<0$ for some $\theta'$.
Then $u(a_2,\cdot)$ is either not DSC in a common direction with $u(a_0,\cdot)$ or not DSC in a common direction with $u(a_1,\cdot)$. So the same argument as in the previous paragraph---with $a_2$ in place of $a_1$, and $a_1$ in place of $a_0$ if necessary---can be used to conclude that $u(a_2,\cdot)$ is affine. \marker

An expected utility is affine or one-dimensional if and only if its generating von Neumann-Morgenstern (vNM) utility is, respectively, affine or one-dimensional. Hence, for convex environments induced by lotteries and expected utility, \autoref{prop:DSCDstar} can be equivalently stated in terms of either the vNM or the expected-utility function. 

Recall that DSCD essentially characterizes convex choice (\autoref{prop:CC-DSCD}), and we have argued that convex choice is crucial for tractability (\autoref{prop:CC-IC}) and appealing for other reasons (\autoref{rem:CC-otherreasons}). We thus view \autoref{prop:DSCDstar} as suggesting that if one cannot substantially restrict the set of lotteries to consider, then in many contexts multidimensional expected-utility preferences without an affine representation will be unwieldy. 

\appendixref{sec:tightness_DSCDstar} shows that the assumption of $\Theta=\Reals^n$ cannot be dropped from \autoref{prop:DSCDstar}. But one can obtain the result with alternative assumptions; what is important is some richness, beyond the convexity requirement \eqref{star}, in terms of the family $\{u(a,\cdot)\}_{a\in A}$ when preferences are not one-dimensional.  
We offer one such variant below. Say that an action $a\in A$ \emph{strictly dominates} $a'\in A$ if $u(a,\theta)>u(a',\theta)$ for all $\theta\in\Theta$. We also say that
 $u$ is \emph{minimally rich} if there are no actions $a_0,a_1,a_2\in A$ and functions $\lambda_0,\lambda_1: A\rightarrow \R$ such that for all $a\in A$ and $\theta\in\Theta$, $$u(a,\theta)-u(a_0,\theta) = \lambda_0(a) [u(a_1,\theta)-u(a_0,\theta)] +\lambda_1(a)[u(a_2,\theta)-u(a_0,\theta)].$$
 That is, a failure of minimal richness means that after normalizing $u(a_0,\cdot)=0$, all types' utilities from any action $a$ are a linear combination of their utilities from actions $a_1$ and $a_2$.

\begin{proposition}\label{prop:DSCDstar-strict-dominance}
    Assume $u(a,\theta)$ is differentiable in $\theta$, minimally rich, has regular indifferences, and there is an action that strictly dominates another. If the environment is convex and $u$ satisfies  DSCD, then either $u$ is one-dimensional or it has an affine representation.
\end{proposition}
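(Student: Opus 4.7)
The plan is to follow the proof sketch of \autoref{prop:DSCDstar} line-by-line, replacing its two uses of $\Theta=\R^n$ (finding a common zero of two non-parallel hyperplane zero sets) by an appeal to strict dominance, and replacing its use of no-totally-indifferent-type (finding a differentiable affine representation of a chosen action's utility) by an appeal to minimal richness. First, pick $a^\dagger$ strictly dominating some $a^\ddagger$ and normalize by passing to the representation $u(a,\theta)-u(a^\ddagger,\theta)$; this preserves every hypothesis (DSCD, convexity of the environment, differentiability in $\theta$, regular indifferences, minimal richness, one-dimensionality, and existence of an affine representation), and yields $u(a^\ddagger,\cdot)\equiv 0$ and $u(a^\dagger,\cdot)>0$ on $\Theta$. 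As in \autoref{prop:DSCDstar}, the combination of convex environment and DSCD implies that every linear combination of $\{u(a,\cdot)\}_{a\in A}$ is DSC.

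Suppose $u$ is not one-dimensional. As in \autoref{prop:DSCDstar}, pick $a_0,a_1\in A$ so that $u(a_0,\cdot)$ and $u(a_1,\cdot)$ are DSC in non-parallel directions; by regular indifferences, each changes sign on $\Theta$ with zero set contained in a single hyperplane. The main departure from the earlier proof is producing a common zero $\theta_0\in\Theta$ for two DSC reference functions. Fix any $\theta_0$ in the zero set of $u(a_1,\cdot)$ and put
\[\tilde f_0(\theta):=u(a_0,\theta)-c_0\,u(a^\dagger,\theta),\qquad c_0:=\frac{u(a_0,\theta_0)}{u(a^\dagger,\theta_0)}.\]
Strict dominance ensures $u(a^\dagger,\theta_0)>0$ so $c_0$ is well defined; by construction $\tilde f_0(\theta_0)=0=u(a_1,\theta_0)$, and by the linear-combinations DSC property, $\tilde f_0$ is DSC. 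Should $\tilde f_0$ happen to be a scalar multiple of $u(a_1,\cdot)$, then $u(a_0,\cdot)$ lies in the span of $u(a_1,\cdot)$ and $u(a^\dagger,\cdot)$; minimal richness guarantees an action $a_0'$ whose utility does not lie in that span, and we replace $a_0$ by $a_0'$ and repeat.

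With a common zero and two DSC reference functions in hand, the rest follows \autoref{prop:DSCDstar}. The hyperplane zero set of $\tilde f_0$, together with regular indifferences and differentiability in $\theta$, let us pass to a representation under which $\tilde f_0$ is affine; minimal richness provides the diversity among utilities needed to keep this representation differentiable, exactly where the earlier proof used absence of totally indifferent types. Then for any $\theta_1\in\Theta$ with $\tilde f_0(\theta_1)\neq 0$ and $u(a_1,\theta_1)\neq 0$, the combination $u(a_1,\theta_1)\tilde f_0(\cdot)-\tilde f_0(\theta_1)u(a_1,\cdot)$ is DSC and vanishes at both $\theta_0$ and $\theta_1$; DSC forces it to vanish on the entire segment $\ell(\theta_0,\theta_1)\subset\Theta$, and since $\tilde f_0$ is affine there, so is $u(a_1,\cdot)$. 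Varying $\theta_1$ and invoking differentiability yields global affinity of $u(a_1,\cdot)$ on $\Theta$. For any further action $a_2$, either $u(a_2,\cdot)$ is DSC in a direction non-parallel to that of $u(a_0,\cdot)$ or to that of $u(a_1,\cdot)$, and in either case the above construction (applied with $a_2$ playing the role of $a_1$, and $a_0$ or $a_1$ in the role of $a_0$) shows that $u(a_2,\cdot)$ is affine as well.

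The hardest part is the common-zero construction and the ensuing representation change: because $\tilde f_0$ is a linear combination of two utilities rather than the utility of an action, one must verify that DSC, linear-combinations DSC, and the affine representation all survive the change of representation, and that minimal richness genuinely prevents the degenerate linear dependencies among $u(a_0,\cdot)$, $u(a_1,\cdot)$, and $u(a^\dagger,\cdot)$ that would otherwise collapse the argument. These are the same kinds of technicalities flagged in the footnote to \autoref{prop:DSCDstar}'s sketch; strict dominance and minimal richness are precisely what we need to handle them in the absence of $\Theta=\R^n$ and absence of totally indifferent types.
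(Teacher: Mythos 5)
Your overall strategy---reusing the machinery of \autoref{prop:DSCDstar} and letting strict dominance manufacture the common zero that $\Theta=\R^n$ used to provide---is the right instinct, but the execution has a gap at exactly the step you flag as hardest. The hypotheses of \autoref{l:one_function_generates_the_other} and \autoref{lemma:affine_representative} require two functions that share a zero \emph{and} are not DSC in a common direction. You start from $u(a_0,\cdot)$ and $u(a_1,\cdot)$ not DSC in a common direction, but then replace $u(a_0,\cdot)$ by $\tilde f_0=u(a_0,\cdot)-c_0\,u(a^\dagger,\cdot)$; since $u(a^\dagger,\cdot)$ is \emph{not constant} under your normalization (you only subtract $u(a^\ddagger,\cdot)$, you do not divide), $\tilde f_0$ is an essentially unrelated DSC function, and nothing guarantees that $\tilde f_0$ and $u(a_1,\cdot)$ fail to be DSC in a common direction. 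Your only safeguard---checking that $\tilde f_0$ is not a scalar multiple of $u(a_1,\cdot)$ and invoking minimal richness---excludes just one of many failure modes (for instance, $\tilde f_0$ could have the same zero hyperplane and sign pattern as $u(a_1,\cdot)$ without being proportional to it, or could be sign-semidefinite with a degenerate zero set). The paper's proof avoids this by normalizing so that $u(a^\dagger,\cdot)\equiv 1$ and $u(a^\ddagger,\cdot)\equiv 0$; then ``adding a multiple of $u(a^\dagger,\cdot)$'' is adding a constant, and the case split is run on the \emph{pointwise} notion of two utilities being increasing in a common direction at a given $\theta$---a notion under which the constant shifts that create a common zero at $\theta$ provably preserve ``not DSC in a common direction.''

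Relatedly, you never address the paper's Case 2: it can happen that every pair of utilities is increasing in a common direction at every point, yet there is no single global direction, so no shift produces the configuration your argument needs. This is precisely where minimal richness and regular indifferences earn their keep in the paper (via the construction of an action $a''$ whose utility is forced to vanish identically, contradicting minimal richness). Your proposal instead assigns minimal richness the job of preserving differentiability of the representation, but that role is actually played by strict dominance itself: once $u(a^\dagger,\cdot)\equiv1$ and $u(a^\ddagger,\cdot)\equiv 0$, no type is totally indifferent, so \autoref{lemma:affine_representative} applies verbatim. As written, the proposal therefore both leaves the key ``not DSC in a common direction'' hypothesis unverified and omits the case analysis in which the extra assumptions of \autoref{prop:DSCDstar-strict-dominance} are genuinely used.
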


\autoref{prop:DSCDstar-strict-dominance} substitutes \autoref{prop:DSCDstar}'s assumptions of $\Theta=\Reals^n$ and no total indifference with minimal richness, regular indifferences, and strict dominance between some pair of actions. Observe that the dominance assumption is satisfied whenever there is one component of the action space over which all types have strictly monotonic preferences; in particular, it holds in the quasi-linear environments common in mechanism design.

To illustrate how the ``one-dimensional or affine representation'' conclusion of Propositions \ref{prop:DSCDstar} and \ref{prop:DSCDstar-strict-dominance} is useful, we return to the CES preferences discussed earlier. Consider $X,\Theta \subset \Reals^n_+$ each with nonempty interior,
and the (generalized) CES utility $v(x,\theta)=\left(\sum_{i=1}^n (x_{(i)})^r\theta_{(i)} \right)^{s}+w(x)$ with $s>0$. Although $v$ satisfies DSCD, does the corresponding expected-utility function $u(a,\theta)$ over the lottery space $A=\Delta X$? In one dimension, $n=1$, yes: for example, in that case any linear combination $\lambda v(x,\theta)+\lambda' v(x',\theta)=(\lambda x^{rs}+\lambda' (x')^{rs})\theta^s+(\lambda w(x)+\lambda'w(x'))$ is monotonic in $\theta$, hence (directionally) single crossing. But in multiple dimensions, $n>1$, \autoref{prop:DSCDstar-strict-dominance} implies that $u$ has DSCD if and only if $s=1$; for, if $s\neq 1$, there is no affine representation of $u$.\footnote{The assumptions on $\Theta$ and $X$ ensure that the assumptions of \autoref{prop:DSCDstar-strict-dominance} are satisfied, and that preferences are not one-dimensional when $n>1$. To see that there is no affine representation if $s\neq 1$, assume $0\in X$ (to simplify the argument), fix some action $x_0\neq 0$, and consider the representation 
\begin{equation}
\label{e:CESnotaffine}
\tilde v(x,\theta) = [(x^r\cdot \theta)^s+w(x)-w(x_0)] \frac{x_0^r\cdot \theta}{(x_0^r\cdot \theta)^s}.
\end{equation}
The function $\tilde v$ is affine in $\theta$ for $x=x_0$, but not for general $x$. For, if $x$ is not a scalar multiple of $x_0$, then changing $\theta$ on a hyperplane orthogonal to $x_0$ does not change the fraction in \eqref{e:CESnotaffine}, and since $(x^r\cdot \theta)^s$ is not affine when $s\neq 1$, $\tilde v(x,\cdot)$ is not affine. Since any representation of $v$ is a type-dependent positive affine transformation of $\tilde v$, there is no affine representation.
}

\begin{remark}
Although one-dimensional preferences with DSCD in a convex environment need not have an affine representation, their structure can be characterized using \citepos{KLR:23} results.  If $\Theta$ is compact, then all types' preferences must be a convex combination of two ``extreme'' types', with an ordered weighting. More precisely, there must be a representation \mbox{$\lambda (\alpha \cdot \theta) \overline u(a)+\left(1-\lambda (\alpha \cdot \theta)\right)\underline u(a)$}, where $\alpha\in \Reals^n\setminus\{0\}$ is the direction in which preferences are one-dimensional, $\lambda:\Reals \to [0,1]$ is an increasing function, and $\overline u$ and $\underline u$ represent the preferences of the types that respectively maximize and minimize $\alpha \cdot \theta$.
\end{remark}

\section{Conclusion}\label{sec:conclusion}

We believe convex choice is a valuable property. We have shown that (i) in a suitable sense, it characterizes the sufficiency of local IC (\autoref{prop:CC-IC}), (ii) it is essentially equivalent to a form of single crossing with a simple geometric interpretation (\autoref{prop:CC-DSCD}), and (iii) in convex environments satisfying some regularity conditions, it reduces to ``one-dimensional or affine representation'' (Propositions \ref{prop:DSCDstar} and \ref{prop:DSCDstar-strict-dominance}).

 An alternative notion---equivalent only in one dimension---is connected choice: the set of types that find an action optimal is connected. \appendixref{sec:connected} shows that, modulo details, this property characterizes the sufficiency of local IC in a somewhat stronger sense than convex choice: connected choice is necessary even without considering lower-dimensional type subsets. Nevertheless, we find convex choice more appealing for three (related) reasons. First, it ensures that IC between any two types can be verified via local IC along that line segment; this is more tractable than checking local IC along all paths connecting the two types. Second, convex choice sets are typically more economically meaningful; for instance, in multidimensional cheap talk, equilibria that merely partition the sender's type space into connected sets would be less satisfactory. Third, convex choice ties to directional single crossing and its implications/tractability in a way that connected choice has no analog, to our knowledge.

\begin{singlespace}
    \addcontentsline{toc}{section}{References}
    \bibliographystyle{ecta}
    \bibliography{references_dsc}
\end{singlespace}

\appendix

\titleformat{\section}{\color{DarkRed}\normalfont\Large\bfseries}{Appendix \thesection:}{.5em}{}

\section{Proofs}

\subsection{Proof of \autoref{prop:CC-IC}}
 
\begin{proof}[Proof that statement \ref{CC:1} $\implies$ statement \ref{CC:2}.] 
Assume convex choice and regular indifferences. Fix an arbitrary line segment $\Theta'\subset \Theta$ and let $m:\Theta'\rightarrow A$ satisfy local IC. Fix any two distinct types in $\Theta'$; we can label the line segment between these types as $[0,1]$, corresponding to the convex combinations of the two types. Our goal is to show that $u(m(0),0)\geq u(m(1),0)$, which implies IC on $\Theta'$.

Local IC implies that there is an open cover of $[0,1]$ such that each element of the cover, $N_{\theta}\subset [0,1]$ with $\theta \in [0,1]$, satisfies condition \eqref{e:lIC}. As $[0,1]$ is compact, there is a finite subcover, which we index by $\theta_1<\ldots<\theta_m$. So $0 \in N_{\theta_1}$ and $1 \in N_{\theta_m}$. Local IC implies
\begin{equation}
\label{e:CC:1} u(m(0),0)\ge u(m(\theta_1),0).    
\end{equation}
Now choose any $\tilde{\theta}_1\in N_{\theta_1}\cap N_{\theta_2}$ with $\tilde{\theta}_1 > \theta_1$. By local IC, 
\begin{align}
\label{e:CC:3} u(m(\theta_1),\theta_1)\ge u(m(\tilde{\theta}_1),\theta_1),\\
\label{e:CC:4} u(m(\tilde{\theta}_1),\tilde{\theta}_1)\geq u(m(\theta_1),\tilde{\theta}_1).
\end{align}
We claim that 
\begin{equation}
\label{e:CC:5} u(m(\theta_1),0)\ge u(m(\tilde{\theta}_1),0).
\end{equation}
To prove inequality \eqref{e:CC:5}, suppose to the contrary $ u(m(\tilde{\theta}_1),0)>u(m(\theta_1),0)$. Then inequality \eqref{e:CC:3} implies $\theta_1>0$. If inequality \eqref{e:CC:4} holds strictly, then convex choice implies $u(m(\tilde{\theta}_1),\theta_1)>u(m(\theta_1),{\theta}_1)$, contradicting \eqref{e:CC:3}. If \eqref{e:CC:4} holds with equality, then since $\theta_1\in (0,\tilde{\theta}_1)$, it follows from regular indifferences that $u(m(\tilde{\theta}_1),\theta_1)>u(m(\theta_1),{\theta}_1)$, contradicting \eqref{e:CC:3}.

Combining \eqref{e:CC:1} and \eqref{e:CC:5}, $u(m(0),0)\ge u(m(\tilde{\theta}_1),0)$. Continuing in the same fashion, we obtain $u(m(0),0)\ge u(m(1),1)$, as desired.
\end{proof}

\begin{proof}[Proof that statement \ref{CC:2} $\implies$ statement \ref{CC:1}] 
We prove the contrapositive, first for convex choice and then for regular indifferences.

If $u$ violates convex choice, then there are $a',a'' \in A$ and $\theta_1,\theta_2\in \Theta$ and $\theta'\in\ell(\theta_1,\theta_2)$ such that $u(a'',\theta_i)>u(a',\theta_i)$ for $i=1,2$ and $u(a'',\theta')\le u(a',\theta')$.
Let $\Theta'=\ell(\theta_1,\theta_2)$ and let $a^*(\theta)$ be an arbitrary selection from $\argmax_{a\in \{a',a''\}} u(a,\theta)$. Consider the mechanism $m:\Theta' \to \{a',a''\}$ given by
$$
m(\theta)=
\begin{cases}
a^*(\theta) & \text{ if } \theta \in \ell(\theta_1,\theta')\setminus\{\theta'\}\\
a' & \text{ otherwise.}
\end{cases}
$$
This mechanism is locally IC because all types in $\ell(\theta_1,\theta')$ are getting an optimal action from $\{a',a''\}$ and the mechanism is constant on $\ell(\theta',\theta_2)$. But IC fails because type $\theta_2$ can profitably mimic $\theta_1$.

If $u$ violates regular indifferences, then there are $a',a'' \in A$ and $\theta_1,\theta_2\in \Theta$ and $\theta'\in\ell(\theta_1,\theta_2)$ such that $u(a',\theta_1)=u(a'',\theta_1)$, $u(a'',\theta_2)>u(a',\theta_2)$, and $u(a',\theta')\geq u(a'',\theta')$.  Consider the same mechanism $m$ defined in the previous paragraph, except that $m(\theta_1)=a''$.
It is locally IC but not IC by the argument given in the previous paragraph.
\end{proof}

\subsection{Proof of \autoref{prop:CC-DSCD}}
Recall that $D_{a,a'}(\theta)\equiv u(a,\theta)-u(a',\theta)$.

For the proposition's first statement, suppose $u$ has DSCD. It is sufficient to show convex choice for all binary choice sets.  So consider any $A'=\{a',a''\}\subset A$. Let $\theta',\theta''\in\{\theta: \{a'\}= \argmax_{a\in A'} u(a,\theta)\}$. We have $D_{a',a''}(\theta')> 0$ and $D_{a',a''}(\theta'')> 0$. Since $u$ has DSCD, there is $\alpha\in \R^n\setminus\{0\}$ such that $D_{a',a''}(\cdot)$ is DSC in direction $\alpha$. This implies $D_{a',a''}(\theta)> 0$ for all $\theta\in\ell(\theta',\theta'')$, as required.

For the proposition's second statement, we prove the contrapositive. Suppose $u$ has convex choice. Fix arbitrary $a',a''\in A$ and let
\begin{align*}
\Theta'&:= \{ \theta: D_{a',a''}(\theta)>0\},\\
\Theta''&:= \{ \theta: D_{a',a''}(\theta)<0\}.
\end{align*}
It is straightforward there is not a strict violation of DSCD if either of these sets is empty. So assume $\Theta'\neq \emptyset$ and $\Theta''\neq \emptyset$. Then $\Theta'$ and $\Theta''$ are nonempty, convex (by convex choice), and disjoint sets. A standard separating hyperplane theorem implies that there is $\alpha\in\R^n$ such that $\alpha\cdot \theta' \le \alpha\cdot \theta''$ for all $\theta'\in \Theta'$ and $\theta''\in \Theta''$. Hence, there are no $\theta,\theta',\theta''\in \Theta$ with $\alpha\cdot \theta<\alpha\cdot \theta'<\alpha \cdot \theta''$ and $\min\{D_{a_1,a_2}(\theta), D_{a_1,a_2}(\theta'')\}>0>D_{a_1,a_2}(\theta')$, i.e., there is not a strict violation of DSCD. \hfill \qed

\subsection{Proof of \autoref{prop:DSCDstar}}

For this proof, we fix an arbitrary action $a^* \in A$ and work with the function $$f(a,\theta):=u(a,\theta)-u(a^*,\theta).$$ Since we are interested in representations of $u$, we say that a function $\tilde f$ is a \emph{representative} of $f$ if there is a representation $\tilde u$ of $u$ such that $\tilde f(a,\theta)=\tilde u(a,\theta)-\tilde u(a^*,\theta)$. Note that any type-dependent positive linear transformation of $f$ (i.e., $g(\theta)f(a,\theta)$ with $g(\cdot)>0$) is a representative of $f$. We say that $f$ is \emph{DSC-preserving} if for all finite index sets $I$, $\{a_i\}_{i\in I}\subset A$, and $\{\lambda_i\}_{i\in I}\subset \Reals$, it holds that $\sum_{i\in I}\lambda_i f(a_i,\cdot)$ is DSC.

\begin{proof}[Proof of \autoref{prop:DSCDstar}]
Let $A^*:=A\setminus\{a^*\}$.  
Analogous to \citet[proof of Theorem 2]{KLR:23}, for every $\mu:A^*\rightarrow \R$ with finite support $A_\mu$, there exist probability distributions $P,Q$ on $A_\mu \cup \{a^*\}$ such that $\sum_{a\in A_\mu} f(a,\theta) \mu(a)$ is a multiple of 
\begin{equation}
    \label{e:decomposition}
    \sum_{a\in A_\mu \cup \{a^*\}} u(a,\theta) P(a)-\sum_{a\in A_\mu \cup \{a^*\}} u(a,\theta) Q(a).
\end{equation} Since the environment is convex and $u$ has DSCD, the function \eqref{e:decomposition} is DSC, which implies that $\sum_{a\in A_{\mu}} f(a,\theta) \mu(a)$ is also DSC. Hence, $f(a,\theta)$ is DSC-preserving. \autoref{p:DSCstar} below implies that $f$ has an affine representative or is one-dimensional. It follows that $u$ has an affine representation or is one-dimensional. 
\end{proof}

Accordingly, the key to \autoref{prop:DSCDstar} is the following result.

\begin{proposition}\label{p:DSCstar}
    Assume $\Theta=\R^n$ and $f(a,\theta)$ is DSC-preserving and differentiable in $\theta$.   If no type is totally indifferent, then $f$ is either one-dimensional or has an affine representative.
\end{proposition}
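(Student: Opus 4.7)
The plan is to follow the paper's own sketch for Proposition~\ref{prop:DSCDstar} closely, since \autoref{p:DSCstar} is its abstract counterpart. Assume $f$ is not one-dimensional; we aim to produce an affine representative. Every individual $f(a,\cdot)$ is DSC (apply DSC-preserving with a single nonzero coefficient), and every finite linear combination of the $f(a,\cdot)$'s is DSC. Because $f$ is not one-dimensional, one can locate actions $a_0,a_1\in A$ such that $f(a_0,\cdot)$ and $f(a_1,\cdot)$ are not DSC in any common direction: if all the $f(a,\cdot)$'s shared a common DSC normal $\alpha$, then combined with DSC-preservation of the pairwise differences, preferences would be governed by $\alpha\cdot\theta$ alone, contradicting the failure of one-dimensionality.

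The first substantive step is to show that the zero sets of $f(a_0,\cdot)$ and $f(a_1,\cdot)$ are genuine hyperplanes---not thick slabs---and that their normals $\alpha_0,\alpha_1$ are non-parallel. If, say, the zero set of $f(a_0,\cdot)$ had nonempty interior, differentiability in $\theta$ would force $\nabla_\theta f(a_0,\theta^*)=0$ for any interior point $\theta^*$. Since $\theta^*$ is not totally indifferent, $f(a,\theta^*)\neq 0$ for some $a$, and a carefully chosen combination $\lambda_0 f(a_0,\cdot)+\lambda_a f(a,\cdot)$ is DSC (by DSC-preserving) yet---once compared between points inside and outside the putative thick zero set---has a sign pattern incompatible with any half-space, a contradiction. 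Ruling this out gives hyperplane zero sets with non-parallel normals; since $\Theta=\R^n$, they intersect at some $\theta_0$ with $f(a_0,\theta_0)=f(a_1,\theta_0)=0$.

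Next, pass to a representative in which $f(a_0,\cdot)$ is affine. Since $f(a_0,\cdot)$ is DSC with hyperplane zero set, its sign agrees with the sign of $\alpha_0\cdot(\theta-\theta_0)$; multiplying $f$ by the positive function $\alpha_0\cdot(\theta-\theta_0)/f(a_0,\theta)$, suitably extended across the zero hyperplane, produces a representative with $f(a_0,\theta)=\alpha_0\cdot(\theta-\theta_0)$. The no-totally-indifferent-type hypothesis (cf.\ footnote~\ref{fn:totalindiff}) is used precisely to choose this rescaling so that every $f(a,\cdot)$ remains differentiable in $\theta$. To show every other $f(a_2,\cdot)$ is affine, first treat the case that $f(a_2,\cdot)$ and $f(a_0,\cdot)$ are not DSC in a common direction: re-applying the first step to $(a_0,a_2)$ produces $\tilde\theta_0$ with $f(a_0,\tilde\theta_0)=f(a_2,\tilde\theta_0)=0$. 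For any $\theta_1$ with $f(a_0,\theta_1)\neq 0$ and $f(a_2,\theta_1)\neq 0$, the DSC-preserving combination
\begin{equation*}
g_{\theta_1}(\theta):=f(a_2,\theta_1)\,f(a_0,\theta)-f(a_0,\theta_1)\,f(a_2,\theta)
\end{equation*}
is DSC and vanishes at both $\tilde\theta_0$ and $\theta_1$. Any DSC function with two distinct zeros is identically zero on the line segment between them (monotonicity of $\alpha\cdot\theta$ along the segment forces both $\ge 0$ and $\le 0$ signs), so $f(a_2,\cdot)=[f(a_2,\theta_1)/f(a_0,\theta_1)]\,f(a_0,\cdot)$ on $\ell(\tilde\theta_0,\theta_1)$, hence affine there. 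Varying $\theta_1$ over the open dense subset of $\R^n$ on which both $f(a_0,\cdot)$ and $f(a_2,\cdot)$ are nonzero, and invoking differentiability of $f(a_2,\cdot)$ at $\tilde\theta_0$, one upgrades segment-wise affinity to global affinity on $\R^n$. If instead $f(a_2,\cdot)$ does share a DSC direction with $f(a_0,\cdot)$, then it cannot also share one with $f(a_1,\cdot)$ (else preferences would be one-dimensional), so the same argument applies with $(a_1,a_2)$ in place of $(a_0,a_2)$, using that $f(a_1,\cdot)$ has already been shown affine.

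The main obstacle is the first substantive step---ruling out thick zero sets with only differentiability of $f$ (not analyticity) by careful use of DSC-preserving, where the no-totally-indifferent-type hypothesis is indispensable. Secondary technical concerns are preserving differentiability of every $f(a,\cdot)$ across the rescaling that makes $f(a_0,\cdot)$ affine, and upgrading segment-wise affinity to global affinity via differentiability; both are handled by standard arguments but require attention, especially because the representative change can interact non-trivially with differentiability away from $\tilde\theta_0$.
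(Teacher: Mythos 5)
Your architecture matches the paper's (find a non-common-direction pair $a_0,a_1$, show their zero sets are intersecting hyperplanes, rescale to make $f(a_0,\cdot)$ affine, propagate affinity via two-zero DSC combinations), but two steps have genuine gaps. First, your argument that a thick zero set of $f(a_0,\cdot)$ yields a DSC violation does not go through as described: from $\nabla f(a_0,\theta^*)=0$ and $f(a,\theta^*)\neq 0$ you form $h=f(a_0,\cdot)-\epsilon f(a,\cdot)$, but the resulting sign pattern (positive on one side of the slab, negative at $\theta^*$, negative on the other side) is monotone along the slab's normal and hence perfectly consistent with DSC---recall that DSC functions \emph{can} have thick zero sets (\autoref{fig:DSC2}); to produce a violation you must exhibit, for every candidate direction, three ordered points with signs $+,-,+$ or $-,+,-$, which you never do. The paper's \autoref{l:zero_set_hyperplane} instead derives the hyperplane property purely from the fact that $f(a_0,\cdot)$ and $f(a_1,\cdot)$ are \emph{not} DSC in a common direction: one chooses $\theta_0$ where both vanish, $\theta_1$ where neither does, and $\theta'\in\ell(\theta_0,\theta_1)$ where exactly one vanishes, so that a suitable linear combination is zero at the endpoints but not in between---no differentiability or total-indifference hypothesis needed. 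Relatedly, you have misplaced where the no-total-indifference assumption actually bites: it is used to produce an action with $f(a_2,\theta_0)\neq 0$ and, crucially, to handle the case $\nabla f(a_0,\theta_0)=0$, in which your rescaling factor $\alpha_0\cdot(\theta-\theta_0)/f(a_0,\theta)$ is unbounded near the zero hyperplane and destroys differentiability (and even continuity) of the representative; the paper's Case 2 of \autoref{lemma:affine_representative} perturbs to a nearby point with nonzero gradient, a case your sketch omits entirely.

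Second, your reduction for a general $a_2$ relies on the claim that if $f(a_2,\cdot)$ shares a DSC direction with $f(a_0,\cdot)$ then it cannot also share one with $f(a_1,\cdot)$ ``else preferences would be one-dimensional.'' That inference is false: one-dimensionality requires a single direction common to \emph{all} actions simultaneously, and pairwise sharing does not aggregate. The concrete failure is $f(a_2,\cdot)$ strictly positive (or strictly negative) everywhere---e.g.\ $a_2$ strictly dominating $a^*$---which is DSC in \emph{every} direction and therefore shares one with both $f(a_0,\cdot)$ and $f(a_1,\cdot)$, yet has no zero set for your intersection argument to latch onto. The paper treats this in a separate bullet of \autoref{lemma:affine_representative} by subtracting a multiple of the already-affine $\tilde f_0$ to manufacture a zero and then running the main argument on the difference. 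Without that case, your proof does not establish affinity of all the $f(a,\cdot)$, so the proposal as written is incomplete.
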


The proof of \autoref{p:DSCstar} requires a few lemmas.

\begin{lemma}\label{l:zero_set_hyperplane}
    Assume $\Theta=\R^n$, $f$ is DSC-preserving, and for some $a_1,a_2\in A$, $f(a_1,\cdot)$ and $f(a_2,\cdot)$ are not DSC in a common direction. Then, for $i=1,2$, $\{\theta:f(a_i,\theta)=0\}$ is a hyperplane or empty.
\end{lemma}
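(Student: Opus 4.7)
My plan is to prove the lemma by contradiction: suppose $Z_1:=\{\theta:f(a_1,\theta)=0\}$ is neither empty nor a single hyperplane, and exhibit a linear combination $g=f(a_1,\cdot)+\lambda f(a_2,\cdot)$ that fails to be DSC, contradicting the DSC-preserving hypothesis on $f$; the statement for $Z_2$ is symmetric. First I would note that $f(a_1,\cdot)$ is itself DSC (apply DSC-preservation to the singleton combination), so $\{f(a_1,\cdot)>0\}$ and $\{f(a_1,\cdot)<0\}$ are parallel half-spaces and $Z_1$ is the slab between them---empty, a hyperplane, a slab of positive thickness, or all of $\R^n$. The case $f(a_1,\cdot)\equiv 0$ is immediate to rule out, since then $f(a_1,\cdot)$ would be DSC in every direction and share one with $f(a_2,\cdot)$. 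So assume $Z_1$ has nonempty interior with $f(a_1,\cdot)\not\equiv 0$; after an affine change of coordinates take the DSC direction of $f(a_1,\cdot)$ to be $e_1$, whence differentiability in $\theta$ forces $\nabla_\theta f(a_1,\cdot)\equiv 0$ on $\operatorname{int}(Z_1)$.

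Let $\alpha_2^*$ be a DSC direction of $f(a_2,\cdot)$. A short analysis shows that the DSC directions of $f(a_1,\cdot)$ form either the single ray $\{\mu e_1:\mu>0\}$ (when $f(a_1,\cdot)$ is sign-changing with thick $Z_1$) or the closed half-space $\{\alpha\neq 0:\alpha\cdot e_1\geq 0\}$ (when $f(a_1,\cdot)$ is sign-constant, so $Z_1$ is a half-space); by hypothesis $\alpha_2^*$ lies outside this set, which constrains $\alpha_2^*$ relative to $e_1$. The argument then splits into two cases depending on whether $\alpha_2^*$ is parallel to $e_1$.

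\emph{Case 1: $\alpha_2^*$ is not parallel to $e_1$.} Take any $\lambda\neq 0$: on $\operatorname{int}(Z_1)$, $g=\lambda f(a_2,\cdot)$, so within the slab the sign-structure of $g$ is dictated by $\alpha_2^*$. If $g$ were DSC in some direction $\alpha_g$, a slice-by-slice matching of sign-structures across the parallel hyperplanes $\{\theta_1=\text{const}\}\cap\operatorname{int}(Z_1)$ forces $\alpha_g$ to be a positive scalar multiple of $\alpha_2^*$. Now consider the hyperplane $H:=\{\alpha_2^*\cdot\theta=c_2^+\}$, where $c_2^+$ is the threshold for $\{f(a_2,\cdot)>0\}$: on $H$, $f(a_2,\cdot)=0$ and so $g=f(a_1,\cdot)$, and since $\alpha_2^*$ is not parallel to $e_1$, $H$ meets both $\operatorname{int}(Z_1)$ (where $g=0$) and $\{f(a_1,\cdot)>0\}$ (where $g>0$). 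But $\alpha_g\cdot\theta$ is constant on $H$, contradicting DSC of $g$ in direction $\alpha_g$.

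\emph{Case 2: $\alpha_2^*=-\mu e_1$ for some $\mu>0$.} Then $f(a_1,\cdot)$, $f(a_2,\cdot)$, and $g$ all depend only on $t:=\theta_1$, and DSC of $g$ reduces to sign-monotonicity of $g(t)$ in $t$. Inside the slab-interior $t$-interval $g(t)=\lambda f(a_2,t)$ has sign non-increasing in $t$ (for $\lambda>0$), while outside the slab $f(a_1,\cdot)$ has sign non-decreasing in $t$; the two contributions pull in opposite directions. By choosing $\lambda$ of an appropriate sign and magnitude---exploiting differentiability so that $f(a_1,\cdot)$ grows smoothly from zero at the slab boundary, allowing $\lambda f(a_2,\cdot)$ to dominate just outside the slab while some term dominates at large $|t|$---one produces a non-monotone sign pattern for $g(t)$ (e.g.\ $-,+,-,+$), so $g$ is not DSC. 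The main obstacle is this 1D sub-case: one must calibrate $\lambda$ to the relative magnitudes and growth of $f(a_1,\cdot)$ and $f(a_2,\cdot)$ outside the slab, and differentiability is the crucial ingredient that guarantees a smooth enough transition at the slab boundary for the construction to produce the needed extra sign flips.
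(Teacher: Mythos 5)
Your route is genuinely different from the paper's, which is a short three-point argument: it picks a common zero $\theta_0$ of $f(a_1,\cdot)$ and $f(a_2,\cdot)$, a point $\theta_1$ at which both are nonzero, and an interior point $\theta'$ of $\ell(\theta_0,\theta_1)$ at which only $f(a_1,\cdot)$ vanishes; the combination $f(a_1,\cdot)-\frac{f(a_1,\theta_1)}{f(a_2,\theta_1)}f(a_2,\cdot)$ then vanishes at $\theta_0$ and $\theta_1$ but not at $\theta'$, which no DSC function can do. Your case analysis on DSC directions is much heavier and has concrete gaps. In Case 1 you need a hyperplane $H$ on which $f(a_2,\cdot)=0$; the zero set of a DSC function can be empty (the sign can jump from positive to negative across a hyperplane), so $H$ need not exist, and the argument also needs $f(a_2,\cdot)$ to actually change sign inside the slab, which you should state and justify. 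In Case 2 the claim that $f(a_1,\cdot)$, $f(a_2,\cdot)$ and $g$ ``depend only on $t=\theta_1$'' is false: DSC in direction $\pm e_1$ makes only their \emph{signs} functions of $\theta_1$, whereas the sign of the sum $g$ depends on magnitudes that vary within a slice. The calibration of $\lambda$ that is supposed to produce the extra sign flips is asserted rather than constructed, and it leans on differentiability of $f$ in $\theta$, which is not a hypothesis of the lemma.

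That last point, however, is where your instinct outruns your execution and exposes a problem with the statement itself. Take $\Theta=\R$ (or extend constantly to $\R^n$), $f(a_1,\theta)=\min(\theta,0)+\max(\theta-1,0)$ and $f(a_2,\theta)=2-\theta$. Every combination $\lambda_1 f(a_1,\cdot)+\lambda_2 f(a_2,\cdot)$ is piecewise linear with equal slopes $\lambda_1-\lambda_2$ on the two outer pieces and takes the same-signed values $2\lambda_2$ and $\lambda_2$ at the kinks, from which one checks its sign is always monotone; so the family is DSC-preserving, the two functions are DSC only in directions $e_1$ and $-e_1$ respectively (hence not in a common direction), and yet $\{f(a_1,\cdot)=0\}=[0,1]$ is a thick slab. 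So without some smoothness the lemma's conclusion can fail: your Case 2 is not merely ``the main obstacle,'' it is the case in which the statement breaks. Replacing the kinks by quadratic pieces, the combination $0.1f(a_1,\cdot)+f(a_2,\cdot)$ equals $0.1t^2-1.2t+2.1$ on $(1,\infty)$ and is positive, then negative, then positive there, destroying DSC-preservation exactly as your differentiability heuristic predicts. Correspondingly, the paper's own proof asserts without justification that a common zero $\theta_0$ exists; in the example above the zero sets $[0,1]$ and $\{2\}$ are disjoint. So your proof is incomplete as written, but your diagnosis that smoothness at the slab boundary is the essential ingredient is correct, and any repair of the lemma (for the differentiable setting in which it is applied) must confront precisely the antiparallel-direction case you isolated.
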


\begin{proof}
For $i=1,2$, it is sufficient to prove that the zero set of $f(a_i,\cdot)$ is contained in a hyperplane, since $f(a_i,\cdot)$ being DSC then implies that its zero set is a hyperplane or empty. Moreover, there is no loss of generality in proving it for just $f(a_1,\cdot)$.
    
Accordingly, suppose towards contradiction that the zero set of $f(a_1,\cdot)$ is not contained in a hyperplane.
    Since $f(a_1,\cdot)$ and $f(a_2,\cdot)$ are not DSC in a common direction, there exist $\theta_0$ and $\theta_1$ with $f(a_i,\theta_0)=0$ and $f(a_i,\theta_1)\neq 0$ for $i=1,2$. Moreover, one can choose $\theta_0$ and $\theta_1$ with these properties such that there is $\theta'\in\ell(\theta_0,\theta_1)$ with $f(a_1,\theta')=0\neq f(a_2,\theta')$; see \autoref{fig:zero_set_hyperplane} for an illustration. Then the linear combination $f_3(\theta):=f(a_1,\theta)-\frac{f(a_1,\theta_1)}{f(a_2,\theta_1)}f(a_2,\theta)$ satisfies
    $f_3(\theta_0)=f_3(\theta_1)=0 \neq f_3(\theta')$. Since $\theta'\in\ell(\theta_0,\theta_1)$, the function $f_3$ is not DSC, which contradicts $f$ being DSC-preserving.
    \qedhere
\begin{figure}
\centering
        \begin{tikzpicture}[scale=2.5]
      \fill[blue!20!white] (0,1) -- (0.25,1.25) -- (1.25,0.25) -- (1,0) -- cycle;
      \node[blue,right] at (0.7,0.25) {\footnotesize $f(a_1,\cdot)=0$};
      \draw[red] (0.25,0.25) -- (1.25,1.25);
      \node[red,right] at (1.25,1.25) {\footnotesize$f(a_2,\cdot)=0$};

    \coordinate (t0) at (.6,.6);
    \coordinate (t1) at (.75,1.1);
    \coordinate (t_prime) at ($0.7*(t0) + 0.3*(t1)$);
   \node[label={[left, label distance=0pt]:\footnotesize $\theta_0$},circle,fill=black,scale=0.25] at (t0) {} ;
   \node[label={[above, label distance=0pt]:\footnotesize $\theta_1$},circle,fill=black,scale=0.25] at (t1) {} ;
    \node[label={[xshift=-.4em, yshift=-.4em, label distance=0pt]:\footnotesize $\theta'$},circle,fill=black,scale=0.25] at (t_prime) {};
 {} ;

    \draw[thick,dotted] (t0) -- (t1);

    \end{tikzpicture}    
\caption{Illustration for \autoref{l:zero_set_hyperplane}}
\label{fig:zero_set_hyperplane}
\end{figure}

\end{proof}

\begin{lemma}\label{l:one_function_generates_the_other}
    Let $f$ be DSC-preserving. 
    Let $a_0,a_1\in A$ be such that $f(a_0,\cdot)$ and $f(a_1,\cdot)$ are not DSC in a common direction and suppose $\theta_0$ satisfies $f(a_0,\theta_0)=f(a_1,\theta_0)=0$.
    
    If $f(a_0,\cdot)$ and $f(a_1,\cdot)$ are differentiable at $\theta_0$ and $\nabla f(a_0,\theta_0)\neq 0$, then for all $\theta$ with $f(a_0,\theta),f(a_1,\theta)\neq 0$, it holds that
        \[ f(a_1,\theta) = f(a_0,\theta) \frac{\nabla f(a_1,\theta_{0})\cdot (\theta-\theta_{0})}{\nabla f(a_0,\theta_{0})\cdot (\theta-\theta_{0})}. \]
\end{lemma}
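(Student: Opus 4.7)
My strategy is the standard one foreshadowed in the proof sketch: construct a scalar linear combination of $f(a_0,\cdot)$ and $f(a_1,\cdot)$ that vanishes at both $\theta_0$ and $\theta$, use DSC-preservation to argue it vanishes on the entire line segment $\ell(\theta_0,\theta)$, and then take a first-order expansion at $\theta_0$ to extract the claimed ratio formula.

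More specifically, fix $\theta$ with $f(a_0,\theta)\neq 0$ and $f(a_1,\theta)\neq 0$, and define
\[
g_\theta(\theta'):= f(a_1,\theta)\, f(a_0,\theta') - f(a_0,\theta)\, f(a_1,\theta').
\]
By construction, $g_\theta(\theta)=0$, and $g_\theta(\theta_0)=0$ since $f(a_0,\theta_0)=f(a_1,\theta_0)=0$. Since $f$ is DSC-preserving, $g_\theta$ is DSC in some direction $\alpha\in\R^n\setminus\{0\}$. For any $\theta'\in \ell(\theta_0,\theta)$, the projection $\alpha\cdot \theta'$ lies between $\alpha\cdot \theta_0$ and $\alpha\cdot\theta$ (with equality if the latter two coincide), so the DSC inequalities force $\operatorname{sign}(g_\theta(\theta_0))\le \operatorname{sign}(g_\theta(\theta'))\le \operatorname{sign}(g_\theta(\theta))$, i.e., $g_\theta(\theta')=0$. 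Hence $g_\theta\equiv 0$ on $\ell(\theta_0,\theta)$.

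Now parameterize the segment by $\theta'_t := \theta_0 + t(\theta-\theta_0)$ for $t\in [0,1]$ and plug into $g_\theta(\theta'_t)=0$. Because $f(a_i,\theta_0)=0$ and $f(a_i,\cdot)$ is differentiable at $\theta_0$ for $i=0,1$,
\[
f(a_i,\theta'_t) \;=\; t\,\nabla f(a_i,\theta_0)\cdot (\theta-\theta_0) + o(t) \quad \text{as } t\to 0^+.
\]
Dividing $g_\theta(\theta'_t)=0$ by $t$ and letting $t\to 0^+$ yields
\[
f(a_1,\theta)\,\nabla f(a_0,\theta_0)\cdot (\theta-\theta_0) \;=\; f(a_0,\theta)\,\nabla f(a_1,\theta_0)\cdot (\theta-\theta_0).
\]

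The final step is to divide by $\nabla f(a_0,\theta_0)\cdot(\theta-\theta_0)$; this is where I expect the only subtlety. By \autoref{l:zero_set_hyperplane}, the zero set of $f(a_0,\cdot)$ is a hyperplane $H_0$ passing through $\theta_0$. Because $f(a_0,\cdot)$ is differentiable at $\theta_0$ with $\nabla f(a_0,\theta_0)\neq 0$, the gradient is normal to $H_0$, so $\nabla f(a_0,\theta_0)\cdot(\theta-\theta_0)=0$ exactly when $\theta\in H_0$, i.e., when $f(a_0,\theta)=0$. Our hypothesis $f(a_0,\theta)\neq 0$ therefore makes the denominator nonzero, and dividing gives the claimed formula. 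The main (mild) obstacle is confirming this last nonvanishing step rigorously, which is why invoking \autoref{l:zero_set_hyperplane} is essential.
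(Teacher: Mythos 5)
Your proof is correct and follows essentially the same route as the paper's: your $g_\theta$ is just $f(a_1,\theta)$ times the paper's linear combination $f_2(\theta')=f(a_0,\theta')-\frac{f(a_0,\theta)}{f(a_1,\theta)}f(a_1,\theta')$, and the paper likewise deduces $\nabla f_2(\theta_0)\cdot(\theta-\theta_0)=0$ from $f_2$ vanishing at $\theta_0$ and $\theta$ (your vanishing-on-the-segment plus first-order-expansion argument just makes that step explicit). The only point to tidy is the nonvanishing of the denominator: you invoke \autoref{l:zero_set_hyperplane}, which assumes $\Theta=\R^n$ whereas the present lemma does not (and it is later applied, via \autoref{lemma:affine_representative}, where $\Theta$ is a general convex set); the fact you need follows directly from $f(a_0,\cdot)$ being DSC in some direction $\beta$ with $\nabla f(a_0,\theta_0)\neq 0$ --- since $\sign f(a_0,\cdot)$ is constant on hyperplanes orthogonal to $\beta$ and $f(a_0,\theta_0)=0$, differentiability forces $\nabla f(a_0,\theta_0)$ to be parallel to $\beta$, so $\nabla f(a_0,\theta_0)\cdot(\theta-\theta_0)=0$ would give $\sign f(a_0,\theta)=\sign f(a_0,\theta_0)=0$, contradicting $f(a_0,\theta)\neq 0$.
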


\begin{proof}
    Let $f_i(\theta):=f(a_i,\theta)$ for all $\theta$ and $i=0,1$.
    Fix arbitrary $\theta\in \Theta$ with  $f_0(\theta)\neq 0$ and $f_1(\theta)\neq 0$, and consider the function
    \[ f_2(\theta'):= f_0(\theta')-\frac{f_0(\theta)}{f_1(\theta)} f_1(\theta'). \]
    Since $f$ is DSC-preserving, $f_2$ is DSC. Moreover, $f_2(\theta_{0})=f_2(\theta)=0$ and, therefore, 
    \[ \nabla f_2(\theta_{0})\cdot (\theta-\theta_{0}) = \left[\nabla f_0(\theta_{0}) - \frac{f_0(\theta)}{f_1(\theta)} \nabla f_1(\theta_{0})\right] \cdot (\theta-\theta_{0}) = 0. \]

    Moreover, $\nabla f_0(\theta_{0})\cdot (\theta-\theta_{0})\neq 0$ (because $\nabla f_0(\theta_0)\neq 0$ and $f_0(\theta)\neq 0$), hence $\nabla f_1(\theta_{0})\cdot (\theta-\theta_{0})\neq 0$. Therefore, for all $\theta$ with $f_0(\theta)\neq 0$ and $f_1(\theta)\neq 0$,
    \[ f_1(\theta) = f_0(\theta) \frac{\nabla f_1(\theta_{0})\cdot (\theta-\theta_{0})}{\nabla f_0(\theta_{0})\cdot (\theta-\theta_{0})}. \qedhere \]
\end{proof}

\begin{lemma}\label{lemma:affine_representative} 
    Assume $f$ is DSC-preserving and, for all $a\in A$, $f(a,\cdot)$ is differentiable with a zero set that is either a hyperplane (intersected with $\Theta$), $\Theta$, or empty. 

    If there are $a_0,a_1,a_2\in A$ and $\theta_0\in \Theta$ such that $f(a_0,\theta_0)=f(a_1,\theta_0)=0\neq f(a_2,\theta_0)$ and $f(a_0,\cdot)$ and $f(a_1,\cdot)$ are not DSC in a common direction, then $f$ has an affine representative.
\end{lemma}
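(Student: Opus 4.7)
The plan is to construct an affine representative by choosing a positive type-dependent rescaling $g$ that makes $\tilde{f}(a_0,\theta):=g(\theta)f(a_0,\theta)$ affine, and then to leverage \autoref{l:one_function_generates_the_other} to conclude that every $\tilde{f}(a,\cdot)$ is affine. Let $H_0,H_1$ denote the hyperplane zero sets of $f(a_0,\cdot),f(a_1,\cdot)$; these hyperplanes are not parallel (were they parallel, their shared normal direction would give a common DSC direction for the two functions, contradicting the hypothesis), and they share $\theta_0$. Let $\alpha$ be a normal to $H_0$ oriented so that $f(a_0,\cdot)$ and $\alpha\cdot(\theta-\theta_0)$ share signs. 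After possibly sliding $\theta_0$ along $H_0\cap H_1$ to a point where $\nabla f(a_0,\theta_0)\neq 0$, I rescale $\alpha$ so that $\nabla f(a_0,\theta_0)=\alpha$. I then define $g(\theta):=\alpha\cdot(\theta-\theta_0)/f(a_0,\theta)$ off $H_0$ (positive since numerator and denominator share sign) and extend continuously across $H_0$ via the reciprocal of the directional derivative $\partial_\alpha f(a_0,\cdot)$. By construction, $\tilde{f}(a_0,\theta)=\alpha\cdot(\theta-\theta_0)$ is affine, and $\tilde{f}(a,\cdot):=g(\cdot)f(a,\cdot)$ is a valid representative of $f$.

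To show $\tilde{f}(a_1,\cdot)$ is affine, I apply \autoref{l:one_function_generates_the_other} to the pair $(a_0,a_1)$ at $\theta_0$: setting $\beta:=\nabla f(a_1,\theta_0)$, the lemma yields
\[ f(a_1,\theta)=f(a_0,\theta)\cdot\frac{\beta\cdot(\theta-\theta_0)}{\alpha\cdot(\theta-\theta_0)} \]
on the dense set where both sides are nonzero; multiplying by $g(\theta)$ cancels $f(a_0,\theta)$ and $\alpha\cdot(\theta-\theta_0)$, leaving $\tilde{f}(a_1,\theta)=\beta\cdot(\theta-\theta_0)$, which extends to all of $\Theta$ by continuity. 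For a general $a\in A$, I distinguish cases on the zero set of $f(a,\cdot)$: if it equals $\Theta$, then $\tilde{f}(a,\cdot)\equiv 0$ is trivially affine; if it is a hyperplane $H_a$ not parallel to $H_0$, pick $\theta_0'\in H_a\cap H_0$ and rerun the argument with $(a_0,a)$ in place of $(a_0,a_1)$ at $\theta_0'$, noting that $\alpha\cdot(\theta_0'-\theta_0)=0$ preserves the cancellation in $g$; if $H_a$ is parallel to (or equals) $H_0$, then $H_a$ is not parallel to $H_1$, so the symmetric argument using $a_1$ as the base action---legitimate because $\tilde{f}(a_1,\cdot)$ is now known affine---delivers $\tilde{f}(a,\cdot)$ affine.

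The main obstacle is the degenerate case in which $f(a,\cdot)$ has empty zero set (sign-constant on $\Theta$), where \autoref{l:one_function_generates_the_other} does not apply for lack of a shared zero with either $f(a_0,\cdot)$ or $f(a_1,\cdot)$. To handle this, I would exploit the fact that $\tilde{f}$ remains DSC-preserving: for every $(c_0,c_1)\in\R^2$, the function $\tilde{f}(a,\theta)+(c_0\alpha+c_1\beta)\cdot(\theta-\theta_0)$ is DSC, and combining this constraint for varying $(c_0,c_1)$ with the additional structure provided by $a_2$ (whose hypothesis $f(a_2,\theta_0)\neq 0$ plays a key role in excluding degeneracies) should pin down $\tilde{f}(a,\cdot)$ as affine. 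A secondary technical concern is ensuring that the rescaling $g$ is well-defined, positive, and yields a valid representative across $H_0$, which ultimately rests on $\nabla f(a_0,\cdot)$ being nonzero somewhere in $H_0\cap H_1$---another place where the hypothesis on $a_2$ is needed to furnish regularity.
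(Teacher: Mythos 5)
Your overall strategy---rescale so that $f(a_0,\cdot)$ becomes the affine function $\alpha\cdot(\theta-\theta_0)$ and then propagate affineness via \autoref{l:one_function_generates_the_other}---is the same as the paper's, but two steps that you treat as technicalities are in fact genuine gaps. First, the degenerate case $\nabla f(a_0,\cdot)=0$ cannot be cured by ``sliding $\theta_0$ along $H_0\cap H_1$'': a DSC, differentiable function can have zero set exactly a hyperplane while its gradient vanishes \emph{everywhere} on that hyperplane (think of $\left(\alpha\cdot(\theta-\theta_0)\right)^3$), so there may be no point of $H_0\cap H_1$, or even of $H_0$, at which $\nabla f(a_0,\cdot)\neq 0$. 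This also breaks your definition of $g$ on $H_0$ via the reciprocal of $\partial_\alpha f(a_0,\cdot)$, which can be $1/0$. The paper handles this with a separate argument (its Case 2): it uses the action $a_2$ with $f(a_2,\theta_0)\neq 0$ to pass to a representative in which $f(a_2,\cdot)$ is locally constant, then perturbs to a nearby point $\theta_0'$ where suitable shifted functions $f(a_i,\cdot)+\lambda_i f(a_2,\cdot)$ vanish with nonzero gradient. You correctly sense that $a_2$ supplies the needed regularity, but you do not supply the construction, and no version of ``sliding'' can substitute for it. Relatedly, even in the nondegenerate case your $g$ is only pinned down off $H_0$; since $g$ need not be continuous across $H_0$, the claims that $\tilde f(a_1,\cdot)=\beta\cdot(\theta-\theta_0)$ and that other $\tilde f(a,\cdot)$ ``extend by continuity'' do not follow. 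The paper's definition of the representative has three branches (dividing by $f(a_0,\cdot)$ off $H_0$, by $f(a_1,\cdot)$ on $H_0\setminus H_1$, and using a free constant $k$ on $H_0\cap H_1$ that is chosen at the end to make everything affine) precisely to close this hole.

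Second, you explicitly leave open the case where $f(a,\cdot)$ has empty zero set, saying that varying linear combinations ``should pin down'' affineness. That is the right instinct but not a proof. The paper's resolution is short and worth internalizing: subtract from $\tilde f(a,\cdot)$ a multiple of the already-affine $\tilde f(a_0,\cdot)$ so that the difference vanishes at a chosen point; the difference is still DSC-preserving (linear combinations are), so the nonempty-zero-set argument applies to it, and affineness of $\tilde f(a,\cdot)$ follows by adding back the affine term. Similarly, for a general $a$ whose zero hyperplane $H_a$ avoids $\theta_0$, rerunning \autoref{l:one_function_generates_the_other} at a point $\theta_0'\in H_a\cap H_0$ again requires $\nabla f(a_0,\theta_0')\neq 0$ and so inherits the first gap; the paper instead builds an auxiliary affine combination of $\tilde f(a_0,\cdot)$ and $\tilde f(a_1,\cdot)$ vanishing at $\theta_0$ and at a point of $H_a$ \emph{off} $H_0\cup H_1$, where the representative is provably differentiable and the comparison function has nonzero gradient. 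Until you handle the vanishing-gradient and empty-zero-set cases concretely, the proof is incomplete.
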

\begin{proof}
Let $f_i(\theta):=f(a_i,\theta)$ for $i=0,1$.

    \textbf{Case 1:} Suppose $\nabla f_0(\theta_0)\neq 0$.
    
    By \autoref{l:one_function_generates_the_other}, for all $\theta$ with $f_0(\theta)\neq 0$ and $f_1(\theta)\neq 0$, we have
    \begin{align}\label{eq:f_1_as_f_0}
        f_1(\theta) = f_0(\theta) \frac{\nabla f_1(\theta_{0})\cdot (\theta-\theta_{0})}{\nabla f_0(\theta_{0})\cdot (\theta-\theta_{0})}.
    \end{align}
   For each $i=0,1$, the set $\{\theta:f_i(\theta)=0\}$ is neither empty (as it contains $\theta_0$) nor $\Theta$ (as $f_0$ and $f_1$ are not DSC in a common direction),  and therefore, by hypothesis, is a hyperplane. Moreover, for each $i=0,1$, since $f_i$ is DSC and $\nabla f_i(\theta_0)\neq 0$, it follows that $\sign\left([f_i(\theta)\right)=\sign\left(\nabla f_i(\theta_0) \cdot (\theta-\theta_0)\right)$.
    Therefore, we can define a representative $\tilde f$ as follows:
    \begin{align}
    \tilde f(a,\theta):= \begin{cases}
        f(a,\theta)\frac{\nabla f_0(\theta_0) \cdot (\theta-\theta_0)}{f_0(\theta)}    &\text{ if } f_0(\theta)\neq 0\\
        f(a,\theta)\frac{\nabla f_1(\theta_0) \cdot (\theta-\theta_0)}{f_1(\theta)}    &\text{ if } f_0(\theta)= 0\neq f_1(\theta) \\
        k f(a,\theta)   &\text{ if } f_0(\theta)= f_1(\theta)=0, 
    \end{cases}\label{eq:tilde_f}
    \end{align}
    for some $k\in\R$ that remains to be specified.
    It follows from \eqref{eq:f_1_as_f_0} and \eqref{eq:tilde_f} that for each $i=0,1$, the function $\tilde f(a_i,\cdot)$ is affine, as it is $0$ on $\{\theta:f_0(\theta)=f_1(\theta)=0\}$ and $
    {\nabla f_i(\theta_{0})\cdot (\theta-\theta_{0})}$ otherwise. 

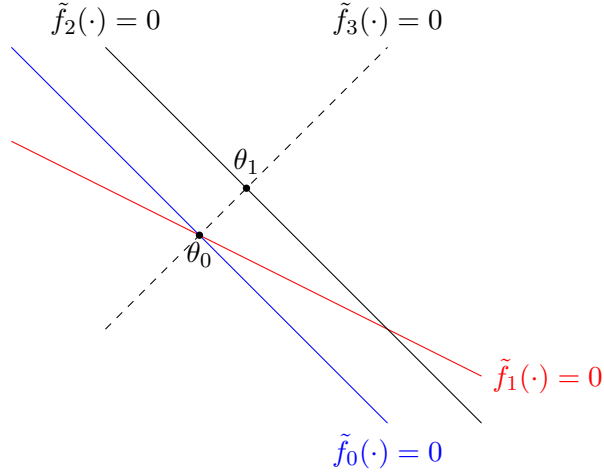
\begin{figure}
\centering
        \begin{tikzpicture}[scale=2.5]
      \draw[blue] (0,2) -- (2,0) node[below] {\small$\tilde f_0(\cdot)=0$};
      \draw[red] (0,1.5) -- (2.5,0.25) node[right] {\small$\tilde f_1(\cdot)=0$};
      \draw (0.5,2) node[above] {\small$\tilde f_2(\cdot)=0$} --(2.5,0) ;
      \draw[dashed] (.5,0.5) -- (2,2) node[above] {\small$\tilde f_3(\cdot)=0$} ;
      
   \node[label={[above, label distance=0pt]:\small $\theta_1$},circle,fill=black,scale=0.25] at (1.25,1.25) {} ;
   \node[label={[below, label distance=0pt]:\small $\theta_0$},circle,fill=black,scale=0.25] at (1,1) {} ;
    \end{tikzpicture}    
\caption{A step in the proof of \autoref{lemma:affine_representative}.}
\label{fig:illustration_Case1}
\end{figure}

    Consider any $a_2\in A$ and let $f_2:=f(a_2,\cdot)$ and $\tilde f_2:=\tilde f(a_2,\cdot)$. We will argue that $\tilde f_2$ is affine, which proves the lemma.
    \begin{itemize}
    \item First, suppose
    $\tilde f_2(\theta_0)\neq 0$ and $\tilde f_2$ has a nonempty zero set. By DSC, the zero set of $\tilde f_2$ contains a hyperplane (that does not pass through $\theta_0$). Since the zero sets of $\tilde f_0$ and $\tilde f_1$ are hyperplanes (that pass through $\theta_0$), there is $\theta_1$ with $\tilde f_2(\theta_1)=0$, $\tilde f_0(\theta_1)\neq 0$ and  $\tilde f_1(\theta_1)\neq 0$. Moreover, 
    \[\tilde f_3(\theta):=\tilde f_0(\theta)-\frac{\tilde f_0(\theta_1)}{\tilde f_1(\theta_1)}\tilde f_1(\theta)\]
    is affine as a linear combination of affine functions and satisfies $\tilde f_3(\theta_1)=\tilde f_3(\theta_0)=0$. 
    Since $\tilde f_2(\theta_1)=0\neq \tilde f_2(\theta_0)$, $\tilde f_2$ and $\tilde f_3$ are not DSC in a common direction. See \autoref{fig:illustration_Case1}. Since $f_0(\theta_1)\neq 0$ and $f_0$ is continuous, we observe from \eqref{eq:tilde_f} that in a neighborhood of $\theta_1$,
    \[        \tilde f_2(\theta)=f_2(\theta)\frac{\nabla f_0(\theta_0) \cdot (\theta-\theta_0)}{f_0(\theta)}.\]
    Since $f_2$ and $f_0$ are differentiable, $\tilde f_2$ is differentiable at $\theta_1$ as a composition of differentiable functions. Also, $\tilde f$ is DSC-preserving because $f$ is DSC-preserving and $\tilde f$ is a type-dependent positive affine transformation of $f$.\footnote{Indeed, fix arbitrary $a,a'\in A$ and $\lambda,\lambda'\in\R$. Then $\lambda f(a,\theta)+\lambda'f(a', \theta)$ is DSC and, for all $\theta$, \[\sign\left(\lambda f(a,\theta)+\lambda'f(a',\theta)\right)=\sign\left(\lambda \beta(\theta)f(a,\theta)+\lambda' \beta(\theta)f(a',\theta)\right)=\sign\left(\lambda \tilde f(a,\theta)+\lambda'\tilde f(a',\theta)\right),\] where $\beta(\theta)>0$ as defined in \eqref{eq:tilde_f}. Therefore, $\lambda \tilde f(a,\theta)+\lambda'\tilde f(a',\theta)$ is DSC. We remark that in general, type-dependent monotonic transformations (rather than positive affine transformations) of $f$ need not be DSC-preserving.} By \autoref{l:one_function_generates_the_other},
    \[ \tilde f_2(\theta) = \tilde f_3(\theta)\frac{\nabla \tilde f_2(\theta_1)\cdot(\theta-\theta_1)}{\nabla \tilde f_3(\theta_1)\cdot (\theta-\theta_1)} \] on the set $\{\theta:f_2(\theta)\neq 0 \text{ and } f_3(\theta)\neq 0\}$. Since $\tilde f_3$ is affine, 
    it follows that $\tilde f_2$ is affine on that set. \autoref{l:zero_set_hyperplane} implies that the zero set of $\tilde f_2$ is a hyperplane; hence, $\tilde f_2$ is affine except possibly on $\{\theta:\tilde f_3(\theta)=0\}$ (which is a hyperplane) and one can check that there is a choice of $k$ in \eqref{eq:tilde_f} that makes $\tilde f_2$ affine everywhere.

    \item Next, suppose $\tilde f_2(\theta_0)\neq 0$ and $\tilde f_2$ is strictly positive (or strictly negative) on $\Theta$. Fix $\theta_2$ with $\tilde f_0(\theta_2)\neq 0$ and define $\tilde f_3(\theta):= \tilde f_2(\theta) - \frac{\tilde f_2(\theta_0)}{\tilde f_0(\theta_2)} \tilde f_0(\theta)$. Then $\tilde f_3(\theta_2)=0$. If $\tilde f_3$ vanishes everywhere, $\tilde f_2$ is affine. Otherwise, the arguments in the previous bullet point imply that $\tilde f_3$ is affine. Since $\tilde f_0$ is affine, it follows that $\tilde f_2$ is affine.

    \item It remains to consider $\tilde f_2(\theta_0)= 0$. \autoref{l:one_function_generates_the_other} yields that for all $\theta$ with $f_0(\theta)\neq 0$ and $f_2(\theta)\neq 0$, we have
    \[f_2(\theta)=f_0(\theta)\frac{\nabla f_2(\theta_0)\cdot (\theta-\theta_0)}{\nabla f_0(\theta_0)\cdot (\theta-\theta_0)}\]
    and, therefore, $\tilde f_2(\theta)=\nabla f_2(\theta_0)\cdot (\theta-\theta_0)$. Similarly, for all $\theta$ with $f_0(\theta)=0$ and $f_1(\theta)\neq 0$ and $f_2(\theta)\neq 0$, we have 
    \[f_2(\theta)=f_1(\theta)\frac{\nabla f_2(\theta_0)\cdot (\theta-\theta_0)}{\nabla f_1(\theta_0)\cdot (\theta-\theta_0)}\]
    and, therefore, $\tilde f_2(\theta)=\nabla f_2(\theta_0)\cdot (\theta-\theta_0)$. It follows that $\tilde f_2$ is affine.
    \end{itemize}

    \medskip
    
    \textbf{Case 2:} Suppose $\nabla f_0(\theta_0)= 0$.

    Since we assumed that no type is totally indifferent, there is $a'\in A$ with $f(a',\theta_0)\neq 0$.\footnote{Under the weaker assumption stated in \autoref{fn:totalindiff}, if there is no such action (i.e., $\theta_0$ is totally indifferent), then there are actions $a'_0$ and $a''_0$ such that $\nabla [u(a'_0,\theta_0)-u(a''_0,\theta_0]\neq 0$. Without loss, we may assume that $a''_0$ is the action whose utility we normalized to zero for all types at the outset of this proof. Hence, $\nabla f(a'_0,\theta_0)\neq 0$. We can then apply Case 1 using $f(a'_0,\cdot)$ in place of $f_0(\cdot)$, and one of $f_0(\cdot)$ or $f_1(\cdot)$ in place of $f_1(\cdot)$, because at least one of $f_0(\cdot)$ and $f_1(\cdot)$ is not DSC in a common direction with $f(a'_0,\cdot)$.} By choosing a different representative if necessary, we can assume without loss of generality that $f(a',\cdot)$ is constant in a neighborhood of $\theta_0$ (while maintaining differentiability of $f(a,\cdot)$ for all $a\in A$). 
    Because every neighborhood of $\theta_0$ contains $\theta_0'$ with $\nabla f_0(\theta_0')\neq 0$ (otherwise $f_0$ would be identically zero in an neighborhood of $\theta_0$), there are $\theta_0'\in \Theta$ and $\lambda_0,\lambda_1\in \R$ such that $f_0(\theta_0')+\lambda_0 f(a',\theta_0')=0$, $f_1(\theta_0')+\lambda_1 f(a',\theta_0')=0$, and $\nabla f_0(\theta_0')+\lambda_0 \nabla f(a',\theta_0')\neq 0$. The arguments for Case 1 then establish that $f$ has an affine representative. 
\end{proof}

\begin{proof}[Proof of \autoref{p:DSCstar}]
    If there is $\alpha\in\R^n\setminus\{0\}$ such that $f(a,\cdot)$ is DSC in direction $\alpha$ for all $a\in A$, then preferences are one-dimensional (see \autoref{l:one-dim} in the \suppapp). So suppose there are $a_0,a_1\in A$ such that $f(a_0,\cdot)$ and $f(a_1,\cdot)$ are not DSC in a common direction, and define $f_i(\theta):=f(a_i,\theta)$ for $i=0,1$. Using \autoref{l:zero_set_hyperplane}, for $i=0,1$ the zero set of $f_i$ is a hyperplane.\footnote{To elaborate: it must be that for $i=0,1$, $f_i$ is neither strictly positive nor strictly negative; otherwise, the two functions would be DSC in a common direction. By continuity, each $f_i$ vanishes on a nonempty set. By \autoref{l:zero_set_hyperplane}, the zero set of each $f_i$ is contained in a hyperplane; and by DSC, the set cannot be a strict (and nonempty) subset of a hyperplane.} Because $f_0(\cdot)$ and $f_1(\cdot)$ are not DSC in a common direction, the hyperplanes on which each function vanishes intersect. Hence, since $\Theta=\Reals^n$, there is $\theta_0$ with $f_0(\theta_0)=f_1(\theta_0)=0$.
    By assumption, there is $a_2\in A$ such that $f(a_2,\theta_0)\neq 0$ and it follows from \autoref{lemma:affine_representative} that $f$ has an affine representative.
\end{proof}

\subsection{Proof of \autoref{prop:DSCDstar-strict-dominance}}    
    Let action $a^*$ strictly dominate action $a_*$. We can choose a differentiable representation such that action $a_*$ yields payoff 0 to all types and action $a^*$ yields payoff 1 for all $\theta\in\Theta$.\footnote{Indeed, the function $\frac{u(a,\theta)-u(a_*,\theta)}{u(a^*,\theta)-u(a_*,\theta)}$ is differentiable in $\theta$ because $u(a^*,\theta)-u(a_*,\theta)>0$ for all $\theta$.}
    
    Say that $f:\Theta \to \Reals$ is \emph{ increasing in direction} $\alpha\in\R^n\setminus \{0\}$ at $\theta$ if for all $\theta'$, it holds that $\alpha\cdot(\theta'-\theta)\ge (\le) 0 \implies f(\theta')\ge (\le) f(\theta)$. Note that this is a strengthening of DSC.
       
    \textbf{Case 1:} Suppose there are $\theta$, $a_0,a_1$ such that $u(a_0,\cdot)$ and $u(a_1,\cdot)$ are not increasing in a common direction at $\theta$.

    By adding a multiple of $u(a^*,\cdot)$ if necessary, we can assume $u(a_0,\theta)=u(a_1,\theta)=0$. Then $u(a_0,\cdot)$ and $u(a_1,\cdot)$ are each DSC but not DSC in a common direction. 
    By \autoref{lemma:affine_representative}, $u$ has an affine representation $\tilde u$. 
    
    \textbf{Case 2:} Otherwise, for all $\theta$, $a_0,a_1$, the functions $u(a_0,\cdot)$ and $u(a_1,\cdot)$ are increasing in a common direction at $\theta$.

    If there is a direction $\alpha\in\R^n\setminus \{0\}$ such that for all $a\in A$ and $\theta\in\Theta$, $u(a,\cdot)$ is increasing in direction $\alpha$ at $\theta$, then $u$ is one-dimensional (see \autoref{l:one-dim} in the \suppapp). So suppose there is $a\in A$ and $\theta_0,\theta_1\in \Theta$ such that $u(a,\cdot)$ is increasing in direction $\alpha\in \R^n\setminus\{0\}$ at $\theta_0$ but not at $\theta_1$. 
    For any $a'\in A$, $u(a',\cdot)$ is increasing in direction $\alpha$ at $\theta_0$. Because $\theta_1$ does not lie on the hyperplane in direction $\alpha$ through $\theta_0$ and because of regular indifferences, $u(a,\theta_1)\neq 0$ and $u(a',\theta_1)\neq 0$. Because the environment is convex, there is $a'' \in A$  such that 
    \[ u(a'',\theta) = u(a,\theta)-[u(a',\theta) - u(a^*,\theta) u(a',\theta_0)]\frac{u(a,\theta_1)}{u(a',\theta_1)-u(a',\theta_0)}. \]
    Then $u(a'',\theta)=0$ for all $\theta$ with $(\theta-\theta_0)\cdot \alpha=0$ and $u(a'',\theta_1)=0$. By regular indifferences, $u(a'',\theta) =  0$ for all $\theta$,\footnote{Indeed, the zero set has nonempty (relative) interior.
    If the linear combination is nonzero at some point, there is a line through that point that intersects the interior of the zero set; on this line, regular indifferences are violated.} and therefore
    \[u(a',\theta)=u(a,\theta)\lambda_0(a')+u(a^*,\theta) \lambda_1(a').\]
    This contradicts minimal richness.
    \hfill \qed

\pagebreak

\section{Supplementary Appendix}
\label{sec:suppapp}

\subsection{Omitted Proofs }
\label{sec:omitted}

\begin{lemma}\label{l:one-dim}
    Let $\alpha \in \R^n\setminus\{0\}$ and $f:A\times \Theta \to \Reals$. If $f(a,\cdot)$ is DSC in direction $\alpha$ for all $a\in A$, then $f$ is one-dimensional.
\end{lemma}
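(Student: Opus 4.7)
The plan is to construct the witness $\tilde f:A\times \R\to \R$ via representative points on level sets of $\alpha\cdot\theta$: for each value $t$ attained by the map $\theta\mapsto\alpha\cdot\theta$, fix some $\theta_t\in\Theta$ with $\alpha\cdot\theta_t=t$ and set $\tilde f(a,t):=f(a,\theta_t)$. One-dimensionality then reduces to the assertion that for every $\theta\in\Theta$ and every pair $a,a'\in A$, $f(a,\theta)\ge(>)f(a',\theta)$ iff $f(a,\theta_t)\ge(>)f(a',\theta_t)$ with $t=\alpha\cdot\theta$ --- equivalently, that the sign of the pairwise difference $D_{a,a'}(\theta):=f(a,\theta)-f(a',\theta)$ is constant on each level set $\{\theta:\alpha\cdot\theta=t\}$.

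An immediate sign-level observation from the hypothesis alone is the following: for any $\theta,\theta'\in \Theta$ with $\alpha\cdot\theta=\alpha\cdot\theta'$, applying DSC of $f(a,\cdot)$ in direction $\alpha$ in both orientations to the pair $(\theta,\theta')$ yields $\sign f(a,\theta)=\sign f(a,\theta')$ for every $a\in A$. Hence the sign of $f(a,\cdot)$ is a well-defined function of $t=\alpha\cdot\theta$. This already handles every case in which $f(a,\theta)$ and $f(a',\theta)$ have strictly different signs, since the same sign pattern then recurs at every $\theta'$ on the same level set.

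The main obstacle is dealing with pairs $(a,a')$ for which $f(a,\cdot)$ and $f(a',\cdot)$ share the same sign at $\theta$, where the sign of $D_{a,a'}$ is not pinned down by per-action signs. I would close this gap by exploiting the collective geometric structure imposed by the common direction $\alpha$: by DSC, each zero set $\{\theta:f(a,\theta)=0\}$ --- when nonempty and not all of $\Theta$ --- is a hyperplane orthogonal to $\alpha$. Combined with additional context-available structure in the invocation of the lemma (namely the DSC-preservation of $f$ in the proof of \autoref{p:DSCstar}, which forces $D_{a,a'}$ itself to be DSC in some direction), a short geometric argument pins down that direction to be collinear with $\alpha$: the zero set of $D_{a,a'}$ cannot cut transversally across the slab between the $\alpha$-orthogonal zero hyperplanes of $f(a,\cdot)$ and $f(a',\cdot)$, so it too must be a hyperplane orthogonal to $\alpha$. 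Consequently $\sign D_{a,a'}$ depends only on $\alpha\cdot\theta$, delivering the required pairwise ranking agreement on level sets and hence one-dimensionality of $f$ via the $\tilde f$ constructed above.
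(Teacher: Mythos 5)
Your first two steps---building $\tilde f(a,t):=f(a,\theta_t)$ from representatives of the level sets of $\alpha\cdot\theta$, and deducing $\sign f(a,\theta)=\sign f(a,\theta')$ whenever $\alpha\cdot\theta=\alpha\cdot\theta'$ by applying DSC in both orientations---are exactly the paper's proof, and the paper stops there. You are right to worry that this only pins down each $f(a,\cdot)$'s sign, and that pairs $a,a'$ whose values share a sign at $\theta$ need a further argument if one-dimensionality is read, as the paper's definition requires, as agreement of all pairwise rankings $f(a,\theta)\ge(>)f(a',\theta)$ along level sets of $\alpha\cdot\theta$.

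The gap is in your proposed repair. First, it leans on DSC-preservation, which is a hypothesis of the proposition invoking the lemma, not of the lemma itself, so even if it worked you would be proving a different statement. Second, and more seriously, the geometric step fails precisely where it is needed: if $f(a,\cdot)$ or $f(a',\cdot)$ has constant sign, its zero set is empty, it is DSC in \emph{every} direction, and there is no ``slab'' to constrain the direction of $D_{a,a'}$. Concretely, take $\Theta=\R^2$, $\alpha=(1,0)$, $f(a,\theta)=e^{\theta_1}$ and $f(a',\theta)=e^{\theta_1+\theta_2}$: every linear combination equals $e^{\theta_1}(\lambda_1+\lambda_2e^{\theta_2})$, whose sign is monotone in $\theta_2$, so the family is DSC-preserving and each member is DSC in direction $\alpha$; yet $\sign D_{a,a'}(\theta)=\sign(-\theta_2)$ is not a function of $\alpha\cdot\theta$. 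So the claim you aim to prove---that $\sign D_{a,a'}$ depends only on $\alpha\cdot\theta$---is simply false; the preferences here are one-dimensional only in a \emph{different} direction, $(0,-1)$, which a $\tilde f$ built from $\alpha$-level sets cannot detect (it would declare $a$ and $a'$ everywhere indifferent). A complete argument must either strengthen the hypothesis so that all pairwise differences, not just the functions $f(a,\cdot)$ themselves, are DSC in the common direction---after which your level-set construction goes through verbatim---or allow the witnessing direction of one-dimensionality to differ from the hypothesized $\alpha$.
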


\begin{proof}
        For any $x\in\R$, choose any $\theta_x\in \R^n$ such that $\alpha\cdot \theta_x = x$ and define, for all $a\in A$, $\tilde f(a,\cdot):\R\rightarrow \R$ by $x\mapsto f(a,\theta_x)$.
        
        We claim that $\sign(f(a,\theta)) =\sign (\tilde f(a,\alpha\cdot \theta))$ for all $\theta$ and $a$. Let $x=\alpha\cdot \theta$ and observe that $\tilde f(a,\alpha\cdot \theta)=\tilde f(a,\alpha\cdot \theta_x)$. Moreover, since $f(a,\cdot)$ is DSC in direction $\alpha$, we obtain 
        \[\sign (f(a,\theta))=\sign (f(a,\theta_x))=\sign (\tilde f(a,\alpha\cdot\theta_x))=\sign (\tilde f(a,\alpha\cdot\theta)). \qedhere\] 
\end{proof}

\subsection{On Tightness of \autoref{prop:DSCDstar}'s Assumptions}
\label{sec:tightness_DSCDstar}
\autoref{prop:DSCDstar} has three assumptions: $\Theta=\Reals^n$; $u(a,\theta)$ is differentiable in $\theta$; and no type is totally indifferent (or the weaker version in \autoref{fn:totalindiff}). \autoref{eg:spiraling} below shows that the first assumption cannot be dropped; \autoref{eg:nondifferentiable} shows that the latter two cannot be jointly dropped.

\begin{example}
    \label{eg:spiraling}
    Let $\Theta=\R^2_+$, $X=\{x_0,x_1,x_2\}$, and 
    $A=\Delta X$. Consider expected utility preferences over $A$ where the vNM utility is given by 
        \begin{align*}
        u(x,\theta)=\begin{cases}
            0 &\text{ if } x=x_0\\
            1 &\text{ if } x=x_1\\
            \tan^{-1}\left(\frac{\theta_1}{\theta_2}\right) &\text{ if } x=x_2.
        \end{cases}
    \end{align*} 
So $u(x_0,\theta)$ and $u(x_1,\theta)$ are constant in $\theta$, and, for any $\theta$, there is a hyperplane through $\theta$ such that $u(x_2,\theta')\ge u(x_2,\theta)$ for $\theta'$ on one side of the hyperplane and $u(x_2,\theta')\le u(x_2,\theta)$ for $\theta'$ on the other side. It follows that for any $p\in \Delta X$,
$p_1 u(x_1,\cdot)+p_2u(x_2,\cdot)+p_3 u(x_3,\cdot)$ is DSC, and so $u$ has DSCD over lotteries. However, it can be checked that neither is $u$ one-dimensional nor does it have an affine representation.

\end{example}

\begin{example}
    \label{eg:nondifferentiable}
    Let $\Theta=\R^2$, $X=\{x_0,x_1,x_2\}$, and 
    $A=\Delta X$. Consider expected utility preferences over $A$ where the vNM utility is given by 
    \begin{align*}
        u(x,\theta)=\begin{cases}
            0 &\text{ if } x=x_0\\
            \theta_1+\theta_2 &\text{ if } x=x_1\\
    \sign\left(\theta_2\right)e^{\left(\frac{-\left|\theta_1\right|}{\left|\theta_2\right|}\right)}\sqrt{\theta_1^{2}+\theta_2^{2}} &\text{ if } x=x_2.
        \end{cases}
    \end{align*} 
\autoref{fig:nondifferentiable} graphs $u(x_1,\cdot)$ and $u(x_2,\cdot)$. Note that $u(x_2,\cdot)$ is not differentiable and type $\theta=0$ is totally indifferent. It can be verified that there is DSCD over lotteries, but neither is $u$ one-dimensional nor does it have an affine representation.
    \begin{figure}
        \centering
        \includegraphics[width=0.4\textwidth]{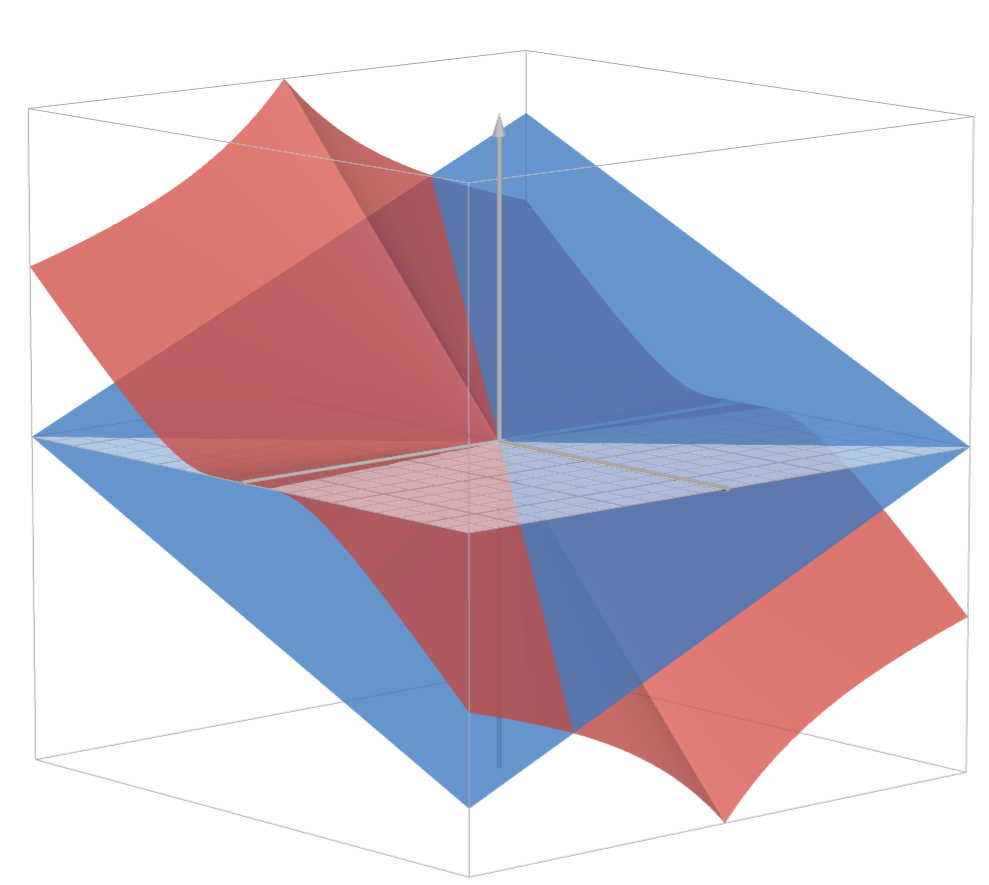}
        \caption{The utilities in \autoref{eg:nondifferentiable}, with $u(x_1,\theta)$ in blue and $u(x_2,\theta)$ in red.}
        \label{fig:nondifferentiable}
    \end{figure}
\end{example}

\subsection{On Convex Choice, DSCD, and Strict Violation of DSCD}

\label{sec:strictviolation}

If $\Theta \subset \Reals$, then absent indifferences, a violation of DSCD is equivalent to a strict violation of DSCD. But not so in multiple dimensions. Consider the following examples with $\Theta=[0,1]^2$ and $A=\{a',a''\}$:

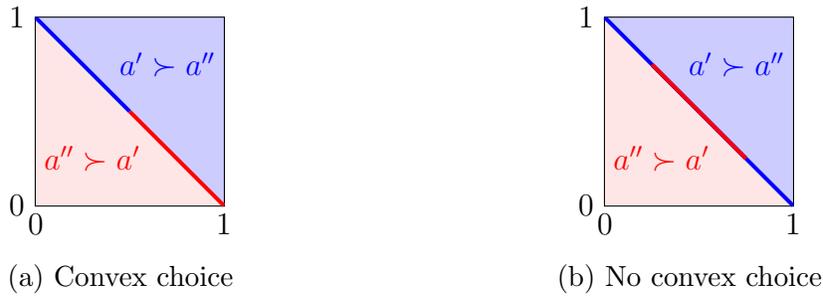
\begin{figure}[h]
    \centering
    \begin{subfigure}{0.45\textwidth}
        \centering
    \begin{tikzpicture}[scale=2.5]
      \draw[thick] (0,0) rectangle (1,1);
      \fill[blue!20!white] (0,1) -- (1,1) -- (1,0) -- cycle;
      \fill[red!10!white] (0,0) -- (0,1) -- (1,0) -- cycle;
      \coordinate (midpoint) at (.5,0.5);
      \draw[blue, line width=1.5pt] (0,1) -- (midpoint);  
      \draw[red, line width=1.5pt]  (1,0) -- (midpoint);  
      \node at (0,-0.1) {0};
      \node at (1,-0.1) {1};
      \node at (-0.1,0) {0};
      \node at (-0.1,1) {1};
      \node[blue] at (0.7,0.75) {$a' \succ a''$};
      \node[red] at (0.3,0.25) {$a'' \succ a'$};
    \end{tikzpicture}
        \caption{Convex choice}
        \label{fig:convexchoice}
    \end{subfigure}
    \begin{subfigure}{0.45\textwidth}
        \centering
    \begin{tikzpicture}[scale=2.5]
      \draw[thick] (0,0) rectangle (1,1);
      \fill[blue!20!white] (0,1) -- (1,1) -- (1,0) -- cycle;
      \fill[red!10!white] (0,0) -- (0,1) -- (1,0) -- cycle;
      \coordinate (A) at (.25,0.75);
      \coordinate (B) at (.75,0.25);
      \draw[blue, line width=1.5pt] (0,1) -- (B);  
      \draw[blue, line width=1.5pt] (1,0) -- (A); 
      \draw[red, line width=1.5pt]  (A) -- (B);  
      \node at (0,-0.1) {0};
      \node at (1,-0.1) {1};
      \node at (-0.1,0) {0};
      \node at (-0.1,1) {1};
      \node[blue] at (0.7,0.75) {$a' \succ a''$};
      \node[red] at (0.3,0.25) {$a'' \succ a'$};
    \end{tikzpicture}
    \caption{No convex choice}
        \label{fig:noconvexchoice}
    \end{subfigure}
    \caption{Violations of DSCD, but not strict violations.}
    \label{fig:DSCDviolations}
\end{figure}

Both panels in the figure feature no indifferences, a violation of DSCD, but no strict violation of DSCD. In  \autoref{fig:convexchoice}, there is convex choice (so convex choice does not imply DSCD); in \autoref{fig:noconvexchoice}, convex choice fails (so no strict violation of DSCD does not imply convex choice).

\subsection{On \citepos{Grandmont:78} Characterization}
\label{sec:Grandmont}

In this subsection, we elaborate on the connection between \autoref{prop:CC-DSCD} and \cite{Grandmont:78}. It is useful to begin with a strengthening of DSCD.

 \begin{definition} \label{def:SDSC} A function $f:\Theta\rightarrow \R$ 
is \emph{strictly directionally single crossing (strict DSC)} if it is DSC in some direction $\alpha$ and, in addition, either $f(\cdot)=0$ or
for all $\theta,\theta'\in \Theta$,
	\[\Big [ (\theta-\theta')\cdot \alpha > 0 \text{ and } f(\theta')=0 \Big ] \implies f(\theta)\neq 0.\] 
The utility function $u:A\times \Theta \to \Reals$ has \emph{strict directionally single-crossing differences} (strict DSCD) if for all $a,a' \in A$, the difference $u(a,\theta)-u(a',\theta)$ is strict DSC.
\end{definition}

Strict DSC can be interpreted as requiring that there is a hyperplane such that the sets $\{\theta:f(\theta)<0\}$ and $\{\theta:f(\theta)>0\}$ are either empty, full, 
or half-spaces defined by that hyperplane, and $\{\theta:f(\theta)=0\}$ is either empty, full, or that hyperplane. So \autoref{fig:DSC1} satisfies strict DSC whereas \autoref{fig:DSC2} violates it. 

\begin{proposition}
\label{prop:SDSCD}
    Assume $\Theta$ is open and that 
    \begin{align}
        \label{e:continuity}
        \forall a,a' \in A,
        \text{ the set $\{\theta: u(a,\theta)\geq u(a',\theta)\}$ is closed in 
        $\Theta$.}
    \end{align}    
    Then $u$ has convex choice and regular indifferences if and only if it has strict DSCD.
\end{proposition}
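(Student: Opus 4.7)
The plan is to prove both directions separately. Fix $a,a'\in A$ and write $D := u(a,\cdot)-u(a',\cdot)$, $P := \{D>0\}$, $N := \{D<0\}$, and $Z := \{D=0\}$. For the direction from strict DSCD to convex choice plus regular indifferences: strict DSCD makes $P$ and $N$ parallel open half-spaces of $\Theta$ (or empty/full), hence convex, giving convex choice. For regular indifferences, strict DSCD forces $Z$ to be empty, full, or exactly the separating hyperplane $H$; if $\theta'$ strictly prefers $a$ and $\theta''$ is indifferent, then $\theta'\notin H$ while $\theta''\in H$, so every interior point of $\ell(\theta',\theta'')$ lies strictly on the same side of $H$ as $\theta'$ and inherits the strict preference.

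For the forward direction, the continuity hypothesis \eqref{e:continuity} makes both $P\cup Z$ and $N\cup Z$ closed in $\Theta$, so $P$ and $N$ are open in $\Theta$ and (because $\Theta$ is open in $\R^n$) open in $\R^n$; convex choice makes them convex. If one of $P,N$ is empty---say $P=\emptyset$---I would show that $Z$ and $N$ cannot both be nonempty. Picking $\theta_Z\in Z$ and $\theta_N\in N$, openness of $\Theta$ lets me form $\theta^\prime := \theta_Z+s(\theta_Z-\theta_N)\in\Theta$ for sufficiently small $s>0$. If $\theta^\prime\in N$, convex choice applied to $\{a,a'\}$ places $\theta_Z\in\ell(\theta_N,\theta^\prime)$ into $N$, contradicting $\theta_Z\in Z$; if $\theta^\prime\in Z$, regular indifferences applied to the strict type $\theta_N$ and the indifferent type $\theta^\prime$ places the interior point $\theta_Z$ into $N$, the same contradiction. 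So $D$ is either uniformly signed or uniformly zero, and strict DSC holds trivially.

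Otherwise $P$ and $N$ are both nonempty. The separating hyperplane theorem applied to these disjoint open convex sets yields $\alpha\in\R^n\setminus\{0\}$ and $c\in\R$ with $\alpha\cdot\theta>c$ on $P$ and $\alpha\cdot\theta<c$ on $N$ (strict inequalities because $P,N$ are open). Writing $H := \{\alpha\cdot\theta=c\}$, the inclusion $H\cap\Theta\subset Z$ is immediate, since any point on $H$ is excluded from both $P$ and $N$. For the reverse inclusion, suppose towards contradiction that $\theta_Z\in Z$ has $\alpha\cdot\theta_Z>c$ (the other case is symmetric). Fix any $\theta_N\in N$; by the intermediate value theorem along the segment, $\ell(\theta_N,\theta_Z)$ crosses $H$ at a strictly interior point $\theta_H$, which lies in $\Theta$ by convexity and therefore in $Z$ by the inclusion just shown. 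But regular indifferences applied to the strict type $\theta_N$ and the indifferent type $\theta_Z$ forces the interior point $\theta_H$ into $N$, contradicting $\theta_H\in Z$. Hence $Z=H\cap\Theta$, $P$ and $N$ are precisely the two open half-spaces cut out by $H$ in $\Theta$, and strict DSC in direction $\alpha$ follows.

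The main obstacle is exactly this last step---pinning down that $Z$ coincides with a hyperplane rather than a possibly larger closed set. Convex choice alone is insufficient (cf.\ \autoref{fig:convexchoice}, where a non-strict DSCD violation survives despite convex choice), so regular indifferences must be brought to bear in tandem with the separating hyperplane and openness of $\Theta$, precisely to manufacture a point on $H$ that sits in the interior of a segment between a strictly preferring type and an indifferent type, where regular indifferences yields the needed contradiction.
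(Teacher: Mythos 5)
Your proof is correct, but it takes a genuinely different route from the paper. The paper does not give a self-contained argument for \autoref{prop:SDSCD}: it notes that the ``if'' direction is straightforward and obtains the ``only if'' direction by invoking \citepos{Grandmont:78} characterization (\autoref{prop:Grandmont}), after first establishing---in a footnote, via a segment-extension argument---that convex choice and regular indifferences imply convex weak choice when $\Theta$ is open, so that Grandmont's hypothesis (H.2) is met. You instead prove the hard direction directly: \eqref{e:continuity} together with openness of $\Theta$ makes $P=\{D>0\}$ and $N=\{D<0\}$ open in $\R^n$, convex choice makes them convex, strict separation of two disjoint open convex sets gives the hyperplane $H$, and regular indifferences (applied to a strictly-preferring type and a hypothetical off-hyperplane indifferent type, with the crossing point of the segment and $H$ serving as the interior witness) pins down $Z=H\cap\Theta$ exactly; your handling of the degenerate case where $P$ or $N$ is empty, extending the segment beyond $\theta_Z$ using openness of $\Theta$, is the same trick the paper uses in its convex-weak-choice footnote. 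Your argument is a natural strengthening of the paper's own proof of \autoref{prop:CC-DSCD} (which also separates $P$ from $N$ but, lacking openness and regular indifferences, can only rule out \emph{strict} violations of DSCD), and it has the virtue of making the appendix self-contained rather than deferring the key step to Grandmont; the paper's route, in exchange, makes the logical relationship between \autoref{prop:SDSCD} and \autoref{prop:Grandmont} explicit and lets each be derived from the other. All steps check out, including the observation that the ``if'' direction needs neither openness of $\Theta$ nor \eqref{e:continuity}, consistent with the paper's remarks on when strict DSCD alone suffices.
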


Hence, under \autoref{prop:SDSCD}'s two assumptions, convex choice and regular indifferences are fully characterized by strict DSCD.\footnote{\label{fn:strictDSCDassms}Strict DSCD implies convex choice and regular indifferences without either assumption. But absent either assumption, strict DSCD can fail despite convex choice and regular indifferences. Here is an example in which $\Theta$ is not open: $\Theta=[0,1]$, $A=\{a',a''\}$, $u(a',\cdot)=0$, and $u(a'',\theta)=0$ if $\theta\in \{0,1\}$ and $u(a'',\theta)>0$ otherwise. For an example absent \eqref{e:continuity}, consider \autoref{fig:convexchoice}.} By comparison, \autoref{prop:CC-DSCD} required neither assumption and showed that convex choice alone is ``almost'' characterized by just DSCD.

To tie \autoref{prop:SDSCD} to \citet{Grandmont:78}, say that $u$ has \emph{convex weak choice} if for all $a,a'\in A$, the set $\{\theta:u(a,\theta)\geq u(a',\theta)\}$ is convex.  In our terminology, \citeauthor{Grandmont:78}'s condition (H.2) is the conjunction of convex weak choice, convex choice, and regular indifferences.\footnote{\label{fn:convexweakchoice}The one-dimensional example in \autoref{fn:strictDSCDassms} has convex choice and regular indifferences but not convex weak choice. Note that given convex choice and regular indifferences, convex weak choice is equivalent to convex indifference, i.e., for all $a,a'$, the set $\{\theta:u(a,\theta)=u(a',\theta)\}$ is convex.} \citeauthor{Grandmont:78}'s condition (H.1) is our \eqref{e:continuity}. His result can thus be stated as:

\begin{proposition}[\cite{Grandmont:78}]
\label{prop:Grandmont}
    Assume $\Theta$ is open.
    Then $u$ satisfies \eqref{e:continuity} and has convex weak choice, convex choice, and regular indifferences if and only if for all $a,a'\in A$, either
    \begin{enumerate}
        \item \label{Grandmont:a} $\{\theta:u(a',\theta)>u(a,\theta)\}=\Theta$ or $\{\theta:u(a',\theta)<u(a,\theta)\}=\Theta$ or \mbox{$\{\theta:u(a',\theta)=u(a,\theta)\}=\Theta$}; or 
        \item \label{Grandmont:b}there is $\alpha \in \Reals^n\setminus 0$ and $c\in \Reals$ such that $\alpha \cdot \theta>c$ if $u(a,\theta)>u(a',\theta)$, $\alpha \cdot \theta<c$ if $u(a,\theta)<u(a',\theta)$, and $\alpha \cdot \theta=c$ if $u(a,\theta)=u(a',\theta)$.
    \end{enumerate}
\end{proposition}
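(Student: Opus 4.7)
The ``if'' direction is a quick check. Under (\ref{Grandmont:a}) all requirements are trivial for the pair $(a,a')$. Under (\ref{Grandmont:b}), the strict-preference and indifference sets are, respectively, the two open half-spaces and the hyperplane defined by $\alpha\cdot\theta=c$ intersected with $\Theta$; so \eqref{e:continuity}, convex weak choice, and convex choice are immediate, and regular indifferences follows because for $\theta$ in the interior of $\ell(\theta',\theta'')$, $\alpha\cdot\theta$ is a strict convex combination of $\alpha\cdot\theta'$ and $\alpha\cdot\theta''$.

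For the ``only if'' direction, fix $a,a'\in A$ and define
\[
\Theta^+:=\{\theta: u(a,\theta)>u(a',\theta)\},\quad \Theta^-:=\{\theta: u(a,\theta)<u(a',\theta)\},\quad \Theta^0:=\{\theta: u(a,\theta)=u(a',\theta)\}.
\]
If any of these equals $\Theta$, condition (\ref{Grandmont:a}) holds; assume otherwise. Convex choice makes $\Theta^+$ and $\Theta^-$ convex; \eqref{e:continuity} makes them open in $\Theta$, hence open in $\Reals^n$ because $\Theta$ is open; and convex weak choice makes $\Theta^0$ convex (as the intersection of two weak-preference sets). If both $\Theta^+$ and $\Theta^-$ are nonempty, separating these disjoint open convex sets yields $\alpha\in\Reals^n\setminus\{0\}$ and $c\in\Reals$ with $\Theta^+\subseteq\{\alpha\cdot\theta>c\}$ and $\Theta^-\subseteq\{\alpha\cdot\theta<c\}$ (the inequalities are strict because each set is open). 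If only one of them is nonempty, say $\Theta^+$, then $\Theta^0$ is nonempty (else $\Theta^+=\Theta$), and separating the open convex $\Theta^+$ from the convex $\Theta^0$ gives $\Theta^+\subseteq\{\alpha\cdot\theta>c\}$ and $\Theta^0\subseteq\{\alpha\cdot\theta\le c\}$.

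The remaining step, showing $\Theta^0\subseteq\{\alpha\cdot\theta=c\}$, is where regular indifferences enters. Suppose for contradiction that $\theta_0\in\Theta^0$ has $\alpha\cdot\theta_0\ne c$. Choose a strict-preference type $\theta^*$ strictly on the opposite side of the hyperplane---namely $\theta^-\in\Theta^-$ when $\alpha\cdot\theta_0>c$, or $\theta^+\in\Theta^+$ when $\alpha\cdot\theta_0<c$ (in the subcase where only $\Theta^+$ is nonempty, the separation already forces $\alpha\cdot\theta_0\le c$, so only the latter choice is required). Regular indifferences applied to the strict endpoint $\theta^*$ and the indifferent endpoint $\theta_0$ forces every interior point of $\ell(\theta^*,\theta_0)$ to lie in the same strict-preference set as $\theta^*$. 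But because $\alpha\cdot\theta^*$ and $\alpha\cdot\theta_0$ straddle $c$, the segment crosses the hyperplane at an interior point $\theta'$ with $\alpha\cdot\theta'=c$; this $\theta'$ cannot lie in a strict-preference set contained in an open half-space disjoint from $\{\alpha\cdot\theta=c\}$, a contradiction. I expect the main obstacle to be the subcase bookkeeping---in particular, arranging the separation so that the nonempty strict-preference sets end up strictly on one open side of the hyperplane, which requires using openness in the separation step---after which regular indifferences delivers the exact pinning of $\Theta^0$ in a single stroke.
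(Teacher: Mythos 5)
Your proof is correct, but it takes a genuinely different route from the paper. The paper does not prove \autoref{prop:Grandmont} from scratch: it attributes the result to \citet{Grandmont:78} and only establishes its equivalence with \autoref{prop:SDSCD} (the ``only if'' of each is derived from the other, with the step from strict DSCD plus \eqref{e:continuity} to the consequent of \autoref{prop:Grandmont} relegated to a footnote). You instead give a direct, self-contained argument: convex choice makes $\Theta^+$ and $\Theta^-$ convex, \eqref{e:continuity} together with openness of $\Theta$ makes them open, a separating-hyperplane theorem (with openness upgrading the separation to strict) places the two strict-preference sets in opposite open half-spaces, and regular indifferences pins $\Theta^0$ onto the hyperplane by the segment-crossing contradiction. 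All the steps check out, including the one-sided subcase where you separate the open convex $\Theta^+$ from the convex $\Theta^0$ (which you correctly note is nonempty there, lest case \ref{Grandmont:a} apply); that subcase is in fact vacuous under the hypotheses, but your argument establishes the required conclusion regardless. Your approach is essentially an extension of the paper's own separating-hyperplane proof of \autoref{prop:CC-DSCD}, and what it buys is a proof that does not lean on Grandmont's original argument or on \autoref{prop:SDSCD}; what the paper's route buys is the explicit bridge between Grandmont's condition (H.2) and the strict-DSCD formulation, which is the point of that appendix section.
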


The ``if'' directions of both \autoref{prop:SDSCD} and \autoref{prop:Grandmont} are straightforward, so let us explain how each proposition's ``only if'' can be obtained from the other. To go from \autoref{prop:Grandmont} to \autoref{prop:SDSCD}, we can first observe that convex choice and regular indifferences imply convex weak choice when $\Theta$ is open,\footnote{Suppose that,  contrary to convex weak choice, $u(a,\theta_1)\geq u(a',\theta_1)$ and $u(a,\theta_3)\geq u(a',\theta_3)$ but $u(a,\theta_2)<u(a',\theta_2)$ for some $\theta_2\in \ell(\theta_1,\theta_3)$. If $\Theta$ is open, then there are $\theta_0$ and $\theta_4$ such that $\theta_1,\theta_3\in \ell(\theta_0,\theta_4)$. If $u(a,\theta)<u(a',\theta)$ for either $\theta\in \{\theta_0,\theta_4\}$, then convex choice fails. If $u(a,\theta)=u(a',\theta)$ for either $\theta\in \{\theta_0,\theta_4\}$, then regular indifference fails. But if $u(a,\theta)>u(a',\theta)$ for both $\theta\in \{\theta_0,\theta_4\}$, then convex choice fails.} and then observe that the consequent of \autoref{prop:Grandmont} implies strict DSCD. To go from \autoref{prop:SDSCD} to \autoref{prop:Grandmont}, it suffices to observe that given \eqref{e:continuity}, strict DSCD implies the consequent of \autoref{prop:Grandmont}.\footnote{Suppose $u$ has strict DSCD in direction $\alpha\neq 0$ and the sign of $u(a,\cdot)-u(a',\cdot)$ is not constant (otherwise, point \ref{Grandmont:a} of \autoref{prop:Grandmont} follows immediately). Then, \eqref{e:continuity} implies that there is $\theta'$ with $u(a,\theta')=u(a',\theta')$. Consider the hyperplane in the direction of $\alpha$ passing through $\theta'$. Strict DSCD implies that all types on that hyperplane are indifferent between $a$ and $a'$, all types to the left (i.e., types $\theta$ with $\alpha \cdot \theta < \alpha \cdot \theta'$) strictly prefer $a$, and all types to the right strictly prefer $a'$. Point \ref{Grandmont:b} of  \autoref{prop:Grandmont} follows, with $c=\alpha \cdot \theta'$.}

\subsection{On \citepos{MM:88} Single Crossing}
\label{sec:MM}

In this subsection, we construct a one-dimensional example in which the utiltity function satisfies \citepos{MM:88}'s generalized single crossing  (GSC) but does not have convex choice. We construct a mechanism that is locally IC but not IC, as predicted by \autoref{prop:CC-IC}.

Let $\Theta=[0,1]$, $A=[0,1]$, and $\kappa>0$.\footnote{\citet{MM:88} allow for transfers. We implicitly set transfers to zero to simplify.} Preferences are given by
\begin{align*}
    u(a,\theta)= \begin{cases}
        -\kappa a - \frac{1}{2} \left(a-\frac{1}{2}\right)^2 \int_0^{\theta} (2-s) \,\mathrm ds  & \text{ if } a\le \frac{1}{2}\\
        -\kappa a - \frac{1}{2} \left(a-\frac{1}{2}\right)^2 \int_0^{\theta} (1+s) \,\mathrm ds  & \text{ if } a> \frac{1}{2}.
    \end{cases}
\end{align*}

\paragraph{Generalized single crossing.} To verify that $u$ satisfies \citeauthor{MM:88}'s GSC, we must show that for all $a,\theta,\theta'$ there is $\lambda>0$ such that 
\begin{equation}
    u_a(a,\theta)-u_a(a,\theta')  = \lambda u_{a,\theta}(a,\theta') (\theta-\theta').
    \label{e:GSC}
\end{equation}

Using subscripts on $u$ to denote partial derivatives,
\begin{align*}
    u_a(a,\theta)= \begin{cases}
        -\kappa -  \left(a-\frac{1}{2}\right) \int_0^{\theta} (2-s) \,\mathrm ds  & \text{ if } a\le \frac{1}{2}\\
        -\kappa -  \left(a-\frac{1}{2}\right) \int_0^{\theta} (1+s) \,\mathrm ds  & \text{ if } a > \frac{1}{2},
    \end{cases}
\end{align*}
and
\begin{align*}
    u_{a,\theta}(a,\theta)= \begin{cases}
         -  \left(a-\frac{1}{2}\right) (2-\theta)   & \text{ if } a\le \frac{1}{2}\\
         -  \left(a-\frac{1}{2}\right) (1+\theta) & \text{ if } a > \frac{1}{2}.
    \end{cases}
\end{align*}

If $u_{a\theta}(a,\theta)>0$ then $a<\frac{1}{2}$, hence
\[ u_a(a,\theta)-u_a(a,\theta') = -\left(a-\frac{1}{2}\right) \int_{\theta'}^{\theta} (2-s)\,\mathrm ds \]
 is strictly positive (negative) if and only if $\theta'<(>)\theta$.  Similarly, if $u_{a\theta}(a,\theta)<0$ then $a>\frac{1}{2}$, hence
\[ u_a(a,\theta)-u_a(a,\theta') = -\left(a-\frac{1}{2}\right) \int_{\theta'}^{\theta} (1+s)\,\mathrm ds \]
 is strictly positive (negative) if and only if $\theta'<(>)\theta$. 
Lastly, if $u_{a\theta}(a,\theta)=0$ then $a=\frac{1}{2}$ and $u_a(a,\theta)-u_a(a,\theta')=0$.
Therefore, for all $a,\theta,\theta'$ there is $\lambda>0$ satisfying \autoref{e:GSC}.

\paragraph{No convex choice and the insufficiency of local IC.} Let $\kappa>0$ be small and compare $a=0$ with $a=1$: 
\begin{align*}
 u(0,\theta)&=  - \frac{1}{2} \left(\frac{1}{2}\right)^2 \int_0^{\theta} (2-s) \,\mathrm ds,  \\
 u(1,\theta)&=  -\kappa - \frac{1}{2} \left(\frac{1}{2}\right)^2 \int_0^{\theta} (1+s) \,\mathrm ds.
\end{align*}

Hence,
\begin{align*}
u(1,\theta)-u(0,0)&=\int_0^{\theta}\left[ (2-s)-(1+s)\right]\mathrm ds - 8\kappa\\
&=\int_0^{\theta} (1-2s) \,\mathrm ds - 8\kappa\\
&=\theta-\theta^2 - 8\kappa,
\end{align*}
with roots $$\underline \theta = \frac{1 - \sqrt{1-32\kappa}}{2} \text{ and } \overline \theta = \frac{1 + \sqrt{1-32\kappa}}{2}.$$
These two types are indifferent between the actions $a=0$ and $a=1$, types in $\left(\underline \theta ,\overline \theta\right)$ strictly prefer $a=1$, and  types 
outside $\left[\underline \theta,\overline \theta\right]$
strictly prefer $a=0$. So convex choice---and hence also DSCD---fails.  Moreover, the following mechanism is locally IC but not IC: 
\begin{align*}
    m(\theta)=\begin{cases}
        0 \text{ if } \theta \leq \underline \theta\\
        1 \text{ otherwise.}
    \end{cases}
\end{align*}

\subsection{Connected Choice}
\label{sec:connected}

For this subsection, we replace the maintained assumption that the type space $\Theta$ is convex with the weaker assumption that it is connected.

\begin{definition}
\label{def:connected}
$u$ has \emph{connected choice} if for all $B\subset A$ and $a\in B$, 
    \begin{equation}
    \left\{\theta: \{a\}=\argmax_{a'\in B} u(a',\theta)\right\} \text{ is connected.} \label{e:connected}
    \end{equation}
\end{definition}

\begin{definition}
    $u$ has \emph{thin indifferences} if for all $B\subset A$ and $a\in B$,
    \begin{equation}
    \{\theta: u(a,\theta)\ge u(b,\theta) \; \forall b\in B\}\subset \cl\{\theta: u(a,\theta)> u(b,\theta) \;  \forall b\in B\}. \label{e:thin}
    \end{equation}    
\end{definition}

Thin indifferences can hold without \hyperref[def:regularindiffs]{regular indifferences}: consider $\Theta=[0,1]$, $A=\{a',a''\}$, and $u(a,\theta)=\indic\{a=a'\}$ for $\theta \notin \{1/4,1/2\}$ and  $u(a',\theta)=u(a'',\theta)$ for $\theta\in \{1/4,1/2\}$. Regular indifferences can also hold without thin indifferences: simply consider a case of total indifference. However, if for every pair of actions there is some type that strictly prefers one and some type that strictly prefers the other, then regular indifferences implies thin indifferences.

\begin{proposition}
\label{prop:connected-IC}
    If $u$ has connected choice and thin indifferences, then every locally IC mechanism with finite range is IC. 
\end{proposition}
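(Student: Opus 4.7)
The plan is to proceed by contradiction: assume $m$ has finite range $A' = \{a_1,\ldots,a_k\}$ and is locally IC but not IC. For each $l \in \{1,\ldots,k\}$, let $T_l := m^{-1}(a_l)$, $P_l := \{\theta : \{a_l\} = \argmax_{a\in A'} u(a,\theta)\}$, and $W_l := \{\theta : a_l \in \argmax_{a\in A'} u(a,\theta)\}$. Connected choice applied to $A'$ gives that each $P_l$ is connected; thin indifferences applied to $A'$ gives $W_l \subset \cl(P_l)$; and $\Theta = \bigcup_l W_l$.

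The central lemma I would establish is a clopen dichotomy: for each $l$ with $P_l \neq \emptyset$, either $P_l \subset T_l$ or $P_l \cap T_l = \emptyset$. Openness of $P_l \cap T_l$ in $P_l$ would follow from LIC at $\theta \in P_l \cap T_l$: the condition $u(m(\theta'),\theta') \geq u(a_l,\theta')$ on some $N_\theta$ combined with $a_l$ being uniquely optimal on $N_\theta \cap P_l$ forces $m \equiv a_l$ there. Closedness would follow from the other LIC condition: if $\theta \in P_l \setminus T_l$, then $u(a_l,\theta) > u(m(\theta),\theta)$ forces $N_\theta \cap T_l = \emptyset$, so $\theta \notin \cl(T_l)$, hence $P_l \setminus T_l$ is open in $P_l$. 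A symmetric limiting argument, using finiteness of $A'$ to extract a subsequence $\theta_n \in P_l$ with $m(\theta_n) = a_q$ constant, would also show $T_l \cap \cl(P_l) = \emptyset$ in the branch where $P_l \cap T_l = \emptyset$.

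The main obstacle is ruling out the ``bad'' branch $P_l \cap T_l = \emptyset$ with $P_l \neq \emptyset$ (and the analogous degenerate case $P_l = \emptyset$ but $T_l \neq \emptyset$, which by thin indifferences forces $W_l = \emptyset$). My plan is to show that this bad branch self-propagates: $T_l \cap \cl(P_l) = \emptyset$ combined with $W_l \subset \cl(P_l)$ gives $T_l \cap W_l = \emptyset$, so every $\theta \in T_l$ admits a maximizer $a_r \in \argmax_{a \in A'} u(a,\theta)$ with $r \neq l$, placing $\theta \in W_r \subset \cl(P_r)$; LIC at $\theta$ then forces $\theta \notin \cl(T_r)$, and by the dichotomy $P_r \cap T_r = \emptyset$ too. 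This induces a fixed-point-free self-map $l \mapsto r(l)$ on the finite index set $L' := \{l : P_l \neq \emptyset, P_l \cap T_l = \emptyset\}$, hence a cycle in $L'$. Extracting the final contradiction from this cycle is the step I am least confident about: I have verified directly for $k=2$ that connectedness of $\Theta$ together with thin indifferences applied to binary sets produces, at a boundary type between $T_1$ and $T_2$, sequences along which LIC imposes contradictory strict preferences; I plan to reduce the general case to such a ``critical binary pair'' along the cycle.

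Once $L' = \emptyset$ is established, IC follows directly. For any $\theta$, pick $a_l \in \argmax_{a \in A'} u(a, \theta)$, giving $\theta \in W_l \subset \cl(P_l) \subset \cl(T_l)$. If $m(\theta) = a_q$, applying LIC to an approximating sequence in $T_l$ yields $u(a_q,\theta) \geq u(a_l,\theta)$; combined with $\theta \in W_l$ this forces $u(a_q,\theta) = u(a_l,\theta) = \max_{a \in A'} u(a,\theta)$, so $a_q$ is weakly optimal at $\theta$ in $A'$ and IC holds.
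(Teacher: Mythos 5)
Your preliminary structure is sound: the clopen dichotomy is correct ($P_l\cap T_l$ is open in $P_l$ by the second LIC inequality and its complement in $P_l$ is open by the first, so connectedness of $P_l$ forces $P_l\subset T_l$ or $P_l\cap T_l=\emptyset$), the facts $T_l\cap\cl(P_l)=\emptyset$ and $W_l\subset\cl(P_l)$ in the bad branch are right, and the final derivation of IC once the bad branch is excluded is valid. But the step you yourself flag as uncertain is the entire content of the proposition, and the argument you sketch for it does not close. A cycle $l_1\to l_2\to\cdots\to l_p\to l_1$ in $L'$ only records that, for each $i$, \emph{some} type $\theta_i\in T_{l_i}$ strictly prefers $a_{l_{i+1}}$ to $a_{l_i}$; since the $\theta_i$ are different types, the cycle carries no contradiction by itself. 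And the reduction to a ``critical binary pair'' breaks down as soon as $k\ge 3$: if you apply connected choice and thin indifferences to the pair $\{a_{l_i},a_{l_j}\}$, connectedness of $\Theta$ does give you a type $\theta^*$ in the intersection of the closures of the two strict-preference regions, but $m(\theta^*)$ may be a third action $a_q$ that $\theta^*$ weakly prefers to both $a_{l_i}$ and $a_{l_j}$, and then LIC at $\theta^*$ imposes nothing contradictory. Your $k=2$ verification goes through precisely because there is no third action to hide behind; that case does not bootstrap.

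This is exactly the difficulty the paper's proof is built to handle, and it does so by a different device: rather than decomposing $\Theta$ by the full range, it fixes a type $\theta_0$ with a profitable deviation to some $b$, and takes a \emph{maximal} subset $A'$ of the range such that both $m(\theta_0)$ and $b$ are assigned by $m$ somewhere on $\Theta_b^{A'}$, the set of types for which $b$ is uniquely optimal within $A'$. Connected choice makes $\Theta_b^{A'}\cap m^{-1}(b)$ and its complement in $\Theta_b^{A'}$ non-separated; LIC at a type witnessing this non-separation produces a new action $c\notin A'$ that is weakly worse than $b$ for some type in $\Theta_b^{A'}$ receiving $b$; and thin indifferences then shows that $A'\cup\{c\}$ still has the defining property, contradicting maximality. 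The point is that one must carefully control \emph{which} actions are quotiented out before invoking connected choice, so that the boundary type produced by connectedness cannot escape to an action outside the current collection. To complete your route you would need an analogous induction; applying connected choice and thin indifferences only to the full range or to raw binary pairs is not enough.
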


The idea of the proof below is as follows. Suppose $m$ is locally IC, but it is not optimal for some type $\theta$ to be truthful. Let $\theta'$ be an optimal report for type $\theta$ and suppose, for simplicity, that it is optimal for type $\theta'$ to be truthful.  Let $\Theta_b$ be the types for which $m(\theta')$ is most-preferred in the range of $m$, let $\Theta_1\subset \Theta_b$ be the types that get $m(\theta')$ under truthtelling, and let $\Theta_2:= \Theta_b\setminus \Theta_1$.
Since $\Theta_b$ is connected by assumption, either the closure of $\Theta_1$ intersects $\Theta_2$ or the closure of $\Theta_2$ intersects $\Theta_1$. In either case, local IC is violated, a contradiction.

\begin{proof}[Proof of \autoref{prop:connected-IC}]
    Fix a locally IC mechanism $m$ with finite range and suppose there is a type $\theta_0$ for which it is not optimal to be truthful. Let $a:=m(\theta_0)$ and let $b\neq a$ be one of type $\theta_0$'s most-preferred alternatives in $m(\Theta)$. Without loss of generality, type $\theta_0$ strictly prefers $b$ to any other alternative.\footnote{If not, by thin indifferences, every neighborhood of $\theta_0$ contains a type with $b$ as the uniquely most-preferred alternative; such a type in $N_{\theta_0}$ does not get $b$ as otherwise local IC would be violated and we can apply our arguments for this type.}
    
    Define $B:=\{\theta: m(\theta)=b\}$. Let $A'$ be a maximal subset of $m(\Theta)$ with the property that 
    $a,b\in m\left(\Theta_b^{A'}\right)$, where
\[\Theta_b^{A'}:=\Big\{\theta: \{b\}=\argmax_{a'\in A'} u(a',\theta)\Big\}.\]
Such a maximal subset exists because $a,b \in m\left(\Theta^{\{b\}}_b\right)=m\left(\Theta\right)$ and $m(\Theta)$ is finite. Also note that $\Theta_{b}^{A'}\not\subset B$ because $\theta_0\in \Theta_{b}^{A'}$ and $m\left(\theta_0\right)=a$. Since $u$ has connected choice, $\Theta_b^{A'}$ is connected; therefore, $\Theta_b^{A'} \cap B$ and $\Theta_b^{A'}\setminus B$ are not separated.\footnote{Two sets are separated if the closure of each set is disjoint from the other set.} Hence, either (i) there is $\theta\in B\cap \Theta_b^{A'}$ with $m\left(N_{\theta}\cap \Theta_b^{A'}\right)\not\subset \{b\}$ or (ii) there is $\theta\in \Theta_b^{A'}\setminus B$ with $b\in m\left(N_{\theta}\cap \Theta_b^{A'}\right)$.
 In case (i), there is $\theta'\in N_{\theta}\cap \Theta_b^{A'}$ with $m(\theta')\neq b$; local IC implies $u(m(\theta'),\theta')\ge u(b,\theta')$ and therefore $m(\theta')\not \in A'$. 
 In case (ii), there is $\theta'\in N_{\theta}\cap \Theta_b^{A'}$ with $m(\theta')=b$; local IC implies $u(m(\theta),\theta)\ge u(b,\theta)$ and therefore $m(\theta)\not \in A'$.
Hence, in either case, there is $\theta_1\in \Theta_b^{A'}$ and $c\in m(\Theta)\setminus A'$ such that $m\left(\theta_1\right)=b$ and $u(b,\theta_1)\ge u(c,\theta_1)$. Since $$b\in \argmax_{a'\in A'\cup m\left(N_{\theta_1}\right)\cup \{c\}} u(a',\theta_1)$$ and $u$ has thin indifferences, there is a type $\theta_2\in N_{\theta_1}$ with $$\{b\}= \argmax_{a'\in A'\cup m\left(N_{\theta_1}\right)\cup \{c\}} u(a',\theta_2).$$Hence, $m(\theta_2)=b$ and the set
\[ \Theta_b^{A'\cup \{c\}}:=\Big\{\theta: \{b\}=\argmax_{a'\in A'\cup\{c\}} u(a',\theta)\Big\} \] 
contains $\theta_0$ and $\theta_2$ and therefore
satisfies $a,b \in m\left(\Theta_b^{A'\cup \{c\}}\right)$. Hence, $A'$ was not maximal, a contradiction.
\end{proof}

Note that \autoref{prop:connected-IC} also holds under the weaker assumption of connected choice in finite choice problems, i.e., the analog of \autoref{def:connected} restricted to finite choice sets $B$. The next result shows that this assumption is in fact necessary. 

\begin{proposition}
\label{prop:connnected-IC-necessity}
    If $u$ violates connected choice on a finite set,\footnote{I.e., there is a finite set $B\subset A$ and $a \in B$ such that \eqref{e:connected} fails.} 
    then there is a mechanism that is locally IC but not IC.
\end{proposition}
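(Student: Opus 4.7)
The plan is to exploit the disconnection of $T_a := \{\theta : \{a\} = \argmax_{b \in B} u(b, \theta)\}$ directly. Because $T_a$ is disconnected and nonempty, I can partition it as $T_a = S_1 \sqcup S_2$ with both $S_i$ nonempty and clopen in the subspace topology of $T_a$; equivalently, $S_1 \cap \cl(S_2) = \emptyset = S_2 \cap \cl(S_1)$. Note that $|B| \geq 2$ (else $T_a = \Theta$, which is connected by assumption), so $B \setminus \{a\}$ is nonempty.

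I would define $m : \Theta \to B$ by $m(\theta) = a$ for $\theta \in S_1$ and $m(\theta) \in \argmax_{b \in B \setminus \{a\}} u(b, \theta)$ otherwise (the argmax being nonempty as $B$ is finite). Failure of IC is immediate: any $\theta \in S_2$ has $a$ as the unique $B$-maximizer, so $u(a, \theta) > u(m(\theta), \theta)$, and $\theta$ secures payoff $u(a,\theta)$ by reporting any $\theta' \in S_1$. For local IC I would take the open neighborhoods $N_\theta := \Theta \setminus \cl(S_2)$ when $\theta \in S_1$, $N_\theta := \Theta \setminus \cl(S_1)$ when $\theta \in S_2$, and $N_\theta := \Theta$ when $\theta \notin T_a$; the first two are genuine open neighborhoods of $\theta$ by the separation property above.

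Verification of \eqref{e:lIC} reduces to a short case check driven by two observations. First, whenever $\theta \in S_i$, the neighborhood $N_\theta$ excludes $S_{3-i}$, so the only neighbors of $\theta$ receiving $a$ are themselves in $S_1$ (and $m$ is then constant on that pair). Second, every $\theta' \notin T_a$ is assigned a best action in $B \setminus \{a\}$, and since $a$ is not uniquely optimal for such $\theta'$, we have $u(m(\theta'),\theta') \geq u(a,\theta')$; hence such $\theta'$ never strictly prefers to mimic a neighbor who receives $a$. Combining these immediately yields both directions of \eqref{e:lIC} in every pairing that can arise. The main place requiring care is the boundary set $\cl(S_1) \setminus S_1$ (and its $S_2$ analogue): when $\theta \in S_1$, the chosen $N_\theta$ may contain types arbitrarily close to $S_2$ that lie outside $T_a$, but the second observation renders those pairings immediately compatible with local IC. Neither convexity of $\Theta$ nor any regularity of $u$ is invoked, so the argument works in the full generality of the subsection.
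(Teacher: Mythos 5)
Your proposal is correct and follows essentially the same route as the paper: separate the disconnected set $\Theta_a$ into two nonempty separated pieces, assign $a$ on one piece and a best action from $B\setminus\{a\}$ everywhere else, and observe that the other piece would profitably mimic the first. The only difference is that you spell out the explicit neighborhoods $\Theta\setminus\cl(S_{3-i})$ and the case check for \eqref{e:lIC}, which the paper leaves as an assertion; your verification is sound.
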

\begin{proof}
    Suppose there is a finite subset $B\subset A$ and $a\in B$ such that $$\Theta_a:=\left\{\theta: \{a\}=\argmax_{a'\in B} u(a',\theta)\right\}$$ is not connected. Then $\Theta_a$ can be partitioned into two nonempty sets $\Theta_1$ and $\Theta_2$ that are separated.  Note that 
   $|B|\geq 2$, since $\Theta$ is connected. Consider any mechanism $m$ such that $m(\theta)=a$ for all  $\theta\in\Theta_1$ and $m(\theta)\in \argmax_{a'\in B\setminus \{a\}} u(a',\theta)$ for all $\theta\in \Theta\setminus \Theta_1$. Mechanism $m$ is locally IC but not IC.
 \end{proof}

We can also show that the assumption of thin indifferences (at least on all finite sets) is necessary in \autoref{prop:connected-IC} subject to the following assumption:
\begin{equation}
    \text{ for all $a\in A$ there is $\theta_a$ with $\{a\}=\argmax_{a'\in A} u(a',\theta_a)$.} \label{e:mostpreferred}
\end{equation}

 \begin{proposition}
     Assume \eqref{e:mostpreferred}. If $u$ violates thin indifferences on a finite set,\footnote{I.e., there is a finite set $B\subset A$ and $a \in B$ such that \eqref{e:thin} fails.} then there is a mechanism that is locally IC but not IC.
 \end{proposition}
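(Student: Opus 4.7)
The plan is to generalize the construction in \autoref{eg:Thinnecessary}. A violation of thin indifferences on a finite set gives us a finite $B\subset A$, an action $a\in B$, and a type $\theta^*$ with $u(a,\theta^*)\ge u(b,\theta^*)$ for all $b\in B$, together with an open neighborhood $N$ of $\theta^*$ such that no $\theta\in N$ has $a$ as the unique $B$-optimum. By \eqref{e:mostpreferred}, there is $\theta_a$ with $\{a\}=\argmax_{a'\in A}u(a',\theta_a)$, and since $a$ is then uniquely $B$-optimal at $\theta_a$, we must have $\theta_a\notin N$.

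Using that $\Theta$ inherits a metric from $\R^n$, I would pick an open ball $V$ around $\theta^*$ small enough that $S:=\cl(V)\subset N$. Let $\Theta_a^w:=\{\theta:u(a,\theta)\ge u(b,\theta)\ \forall b\in B\}$ and let $m^*(\theta)$ denote any selection from $\argmax_{b\in B\setminus\{a\}}u(b,\theta)$ (well-defined since $B$ is finite and $|B|\ge 2$). The proposed mechanism is
\[ m(\theta):=\begin{cases} a & \text{if } \theta\in S\cap \Theta_a^w,\\ m^*(\theta) & \text{otherwise}.\end{cases}\]
IC fails immediately: $\theta^*\in V\cap \Theta_a^w$ gets $a$, whereas $\theta_a\notin S$ gets $m^*(\theta_a)\in B\setminus\{a\}$, which $\theta_a$ strictly disprefers to $a$, so $\theta_a$ would profitably mimic $\theta^*$.

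To verify local IC, for each $\theta$ I would choose $N_\theta\subset N$ when $\theta\in S$ (possible because $S\subset N$ and $N$ is open) and $N_\theta\subset \Theta\setminus S$ when $\theta\notin S$ (possible because $S$ is closed). For any $\theta'\in N_\theta$ the pair falls into one of three cases: both get $a$ (trivial); both get $m^*(\cdot)$, in which case local IC holds because $m^*$ is the pointwise maximizer on $B\setminus\{a\}$; or one gets $a$ and the other $m^*$. The mixed case is the crux: the $S\cap \Theta_a^w$-type blocks mimicking $m^*$ because $u(a,\theta)\ge u(b,\theta)$ for all $b\in B$, while the other type lies in $N$ and so some $b\in B\setminus\{a\}$ satisfies $u(b,\theta')\ge u(a,\theta')$, giving $u(m^*(\theta'),\theta')\ge u(a,\theta')$ and blocking the reverse mimicking.

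The main obstacle I anticipate is the bookkeeping to fit every pair into one of the three cases, especially at the boundary of $S$. The decisive design choice for handling this is requiring $\cl(V)\subset N$ rather than merely $V\subset N$: this guarantees a positive margin between $S$ and $\Theta\setminus N$, so neighborhoods of boundary points of $S$ still sit inside $N$ and the ``there is some $b\in B\setminus\{a\}$ with $u(b,\theta')\ge u(a,\theta')$'' property used in the mixed case remains available.
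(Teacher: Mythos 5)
Your proof is correct and takes essentially the same approach as the paper's: assign $a$ to (a set of) types around the violating type $\theta^*$, give every other type a pointwise-best element of $B\setminus\{a\}$ (which, within the neighborhood $N$, is weakly preferred to $a$, killing the mimicking incentive), and use \eqref{e:mostpreferred} to produce a type $\theta_a$ outside $N$ who strictly prefers $a$ and hence breaks IC. The paper's construction is just a degenerate version of yours---it assigns $a$ only at the single point $\theta_0=\theta^*$ rather than on $S\cap\Theta_a^w$, which makes the neighborhood bookkeeping for local IC immediate.
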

 \begin{proof}
     Suppose $u$ violates thin indifferences on a finite set $B\subset A$. That is, there are $a\in B$, $\theta_0\in\Theta$, and a neighborhood of $\theta_0$, call it $N_{\theta_0}$, such that $u(a,\theta_0)\ge u(b,\theta_0)$ for all $b\in B$ and, for all $\theta'\in N_{\theta_0}$, there is $b\in B$ with $u(b,\theta')\ge u(a,\theta')$. 
     By \eqref{e:mostpreferred}, there is $\theta_a$ satisfying $\{a\}=\argmax_{b\in B} u(b,\theta_a)$. Consider any mechanism $m$ such that $m(\theta_0)=a$ and $m(\theta)\in \argmax_{b\in B\setminus\{a\}}u(b, \theta)$ for all $\theta\neq \theta_0$. Mechanism $m$ is locally IC but not IC.
 \end{proof}

 \end{document}